\newif\ifjpc
\keywords{data sharing; differential privacy; encryption; confidentiality.}
\theoremstyle{plain}
\newtheorem{thm}{Theorem}
\newtheorem{cor}{Corollary}
\newtheorem{rem}{Remark}
\theoremstyle{definition}
\newtheorem{defi}{Definition}
\definecolor{citegreen}{HTML}{458B00}
\newcommand{\descr}[1]{\vspace{0.2cm} \noindent \textbf{#1}}
\begin{document}

\ifjpc

\title[Privacy-Preserving Data Analytics with Potentially Cheating Participants]{Sharing in a Trustless World: Privacy-Preserving Data Analytics with Potentially Cheating Participants}



\author[T.~Nguyen]{Tham Nguyen}	
\address{Macquarie University}	
\email{tham.nguyen@mq.edu.au}  

\author[H. J.~Asghar]{Hassan Jameel Asghar}	
\address{Macquarie University and Data61, CSIRO}	
\email{hassan.asghar@mq.edu.au}  

\author[R.~Bhakar]{Raghav Bhakar}	
\address{Data61, CSIRO}	
\email{raghav.bhaskar@csiro.au} 
\author[D.~Kaafar]{Dali Kaafar}	
\address{Macquarie University and Data61, CSIRO}	
\email{dali.kaafar@mq.edu.au}  

\author[F.~Farokhi]{Farhad Farokhi}	
\address{The University of Melbourne}	
\email{farhad.farokhi@unimelb.edu.au}  




\else
\title{Sharing in a Trustless World: Privacy-Preserving Data Analytics with Potentially Cheating Participants}
\author{Tham Nguyen$^{1}$, Hassan Jameel Asghar$^{1,2}$, Raghav Bhakar$^{2}$, \\
Dali Kaafar$^{1}$ \& Farhad Farokhi$^{3}$\\
\small $^{1}$Macquarie University, Australia\\
\small \texttt{\{tham.ngyuen, hassan.asghar, dali.kaafar\}@mq.edu.au}\\
\small $^{2}$Data61, CSIRO\\
\small \texttt{raghav.bhaskar@csiro.au}\\
\small $^{3}$University of Melbourne\\
\small \texttt{farhad.farokhi@unimelb.edu.au}
}

\maketitle

\fi

\begin{abstract}
\noindent Lack of trust between organisations and privacy concerns about their data are impediments to an otherwise potentially symbiotic joint data analysis.
We propose DataRing, a data sharing system that allows mutually mistrusting participants to query each others' datasets in a privacy-preserving manner while ensuring the correctness of input datasets and query answers even in the presence of (cheating) participants deviating from their true datasets. By relying on the assumption that if only a small subset of rows of the true dataset are known, participants cannot submit answers to queries deviating significantly from their true datasets. 
We employ differential privacy and a suite of cryptographic tools to ensure individual privacy for each participant's dataset and data confidentiality from the system. Our results show that the evaluation of 10 queries on a dataset with 10 attributes and 500,000 records is achieved in 90.63 seconds. 
DataRing could detect cheating participant that deviates from its true dataset in few queries with high accuracy.
\end{abstract}

\ifjpc
\maketitle
\fi

\section{Introduction}
\label{sec:intro}
Joint analysis on multiple datasets owned by different organizations (parties) has the potential to unlock numerous benefits to the participating organizations as well as the society in general. Example areas include medical research, financial fraud detection, and international cyber defence~\citep{Froelicher2017}. While this incentivizes organizations to share data, a major concern is maintaining the privacy of individuals who contribute their data to an organization's dataset. A further concern is if the organizations are mutually distrusting; a dishonest party may not share its ``true'' dataset with the self-serving aim of protecting its business competitive advantage and maximizing its own utility. Although this can be approached using legislative frameworks, whereby parties are legally forced to share their true data, our interest is in a technical solution to detect dishonest behaviour while minimizing the reliance on legal solutions which might deter and/or slow down the adoption of data sharing and joint analysis platforms.

Many systems to share data in a privacy-preserving manner have been proposed in the research literature, built on a combination of cryptographic protocols~\citep{Froelicher2017,Froelicher2019}, differentially private mechanisms~\citep{Froelicher2017, hynes-tee}, and trusted execution environments~\citep{hynes-tee, ohrimenko2016oblivious, hunt2018chiron}. An underlying trust assumption in most of these proposals is that the participants do not deviate from their original datasets or the analysis on their datasets reflects the true result. Reliance on trust is not surprising, since ensuring correctness of inputs is a difficult problem in general. Indeed, in general cryptographic protocols secure against active adversaries, it is assumed that the malicious behaviour (from active adversaries) includes deviating from the true input, since the protocol cannot determine if the claimed input is true or not~\citep[\S 7.2.3]{foc-2}. 

A few recent works in privacy-preserving data sharing have circumnavigated this hurdle by checking if the input satisfies some publicly known relation~\citep{Froelicher2019, corrigan2017prio}, e.g., age being within 0 and 150 years. However, the participants may still deviate from their data as long as it remains within these semantic bounds. Helen~\citep{helen2019} goes further, ensuring consistently of data analysis by relying on encrypted summaries (commitment) of each participant's dataset. The trust assumption, however, is that the participant constructs its summary faithfully using its true dataset in entirety. The central aim of this paper is to construct a system that provides (approximately) accurate data analysis while minimising the trust assumption on the true input dataset, i.e., the fraction of the dataset assumed to be true is as small as possible. 

We propose DataRing, a system that allows mutually mistrusting participants to query each others' datasets in a privacy-preserving manner while ensuring the correctness of input datasets and query answers. An adversary (cheating participant) may modify its dataset to provide incorrect answers to queries. The goal of the adversary is to abuse the data sharing process benefiting from other participants' data without contributing its own. Our salient contribution is the methodology to ensure correctness of input data and query answers. This is based on the observation that the vast majority of records in high-dimensional real-world datasets are unique~\citep[\S 4]{Vincent2017},\footnote{Also see Section~\ref{sec:exp_setup}.}
and if a small random number of such unique records are known by the system\footnote{See Remark~\ref{rem:size_of_backgroundKnow}.}, then the participants, not knowing the exact identities of these records, can only marginally deviate from their true datasets. In more detail, the participant first uploads a random sample of its dataset, which we call the \emph{partial view}. Using a small set of records from the true dataset, the system can verify the correctness of the partial view by relying on the probabilistic properties of the random sample. This ensures that the partial view is as close to a random sample of the true dataset as possible. In a subsequent query phase, the correctness of the answers to queries on the participant's true dataset, is ensured by using hidden \emph{test} queries based on the aforementioned small set of records, verified partial view and meta data, e.g., dataset size and partial view size. 

We call the small sample of random records \textit{background knowledge}, which is assumed to be sampled from the participant's true dataset and prior to the initialization of the data sharing. This background knowledge can be obtained either technically, using for instance, a version of the partial view protocol proposed in this paper (cf. Section \ref{sec:partial_view}), or under legislative requirements. An example of the latter is national financial intelligence agencies requiring records of financial transactions for auditing. As mentioned above, the advantage of DataRing, unlike Helen~\citep{helen2019}, is that this assumes only a small fraction of the entire dataset to be true (cf. Remark \ref{rem:size_of_backgroundKnow}).

Apart from the correctness of input datasets and query answers, the DataRing system also ensures confidentiality of participants' data, query's content and query answer, and privacy of individuals in each participant's dataset. To ensure the confidentiality of each participant's dataset, all computations on participants' data are done in the encrypted domain. 
This ensures that the servers do not learn the contents of the input dataset, the partial view, the queries and their answers. Additionally, the participants (apart from knowing their own datasets) only learn answers to their queries on another participant's dataset. To accomplish this, we employ a suite of cryptographic tools including an additive homomorphic encryption scheme~\citep{koblitz1987elliptic}, collective public key of servers~\citep{Froelicher2017,Froelicher2019}, threshold decryption~\citep{Froelicher2017}, and re-encryption~\citep{Froelicher2017}, and provide formal proofs of security for our confidentiality claims. Furthermore, to ensure the privacy of individuals in a participant's data, we employ differentially private noise~\citep{calib-noise} to the query answers. 
Our schemes might be reminiscent of Private Data Retrieval (PIR) technique~\citep{pir_sergey,SPIR,xpir}. Nonetheless, PIR techniques are not applicable in our setting because PIR aims to hide the content of the query from the database owner and does not ensure the database confidentiality. Existing PIR schemes are implemented in such a way that either sending the whole database to the querying party or using  mutually-distrustful replicated databases at multiple servers.\footnote{Also see Remark~\ref{rem:pir_pv},~\ref{rem:query_compute}.
}

Finally, we thoroughly evaluate the robustness of DataRing in detecting cheating participants and its performance in terms of computational and communication overhead, by implementing an API in C/C++\ifjpc\footnote{DataRing's API will be made available upon the acceptance of the paper.}\fi. Our experimental evaluation shows that DataRing enforces a participant to provide a partial view constructed from a dataset close to its true dataset. Particularly, a participant with a dataset of 500,000 records must use at least 97.15\% of its true dataset to generate its partial view in order to be permitted to join the data sharing (i.e. passing the partial view verification with a probability of 0.95). In addition, DataRing can detect a cheating participant with high accuracy. 
On the performance side, (encrypted) evaluation of 10 queries on a dataset with 10 attributes and 500,000 records is achieved in 90.63 seconds.

The rest of the paper is organized as follows. Section \ref{sec:background} covers preliminaries needed for the rest of the paper. We describe the DataRing system in \ref{sec:system_model}. Section \ref{sec:partial_view} gives detail of the partial view phase and Section \ref{sec:query_computation} covers the query evaluation phase. Section \ref{sec:exp_setup} contains setup for our evaluation. We present detailed security evaluation of DataRing in Section \ref{sec:evaluation} and its performance in Section \ref{sec:performance}. We present related work in Section \ref{sec:related_work} and our conclusion in Section \ref{sec:conclusion}.

\section{Preliminaries}
\label{sec:background}

\subsection{System Model}\label{sec:asump_notation}
The DataRing system model consists of a set of participants $P_1, P_2, \ldots$, where each participant $P_i$ owns a dataset $D_i$ containing $N_i$ records. A record is perceived as belonging to an individual, e.g., one of participant $P_i$'s customers. The participants wish to analyse each other's datasets in the form of queries. To facilitate this, the system consists two non-colluding servers $S_1, S_2$. Abusing notation, we shall use the term ``server'' to mean the set of all servers, and simply use $\mathcal{S}$ to denote it. In practice, this denotes the combined role played by the servers. A generic participant shall be denoted by $P$, with its dataset denoted $D$ containing $N$ records. Figure \ref{fig:system_model} illustrates the DataRing system.

\begin{figure}[t]
\centering
\includegraphics[width=0.7\linewidth]{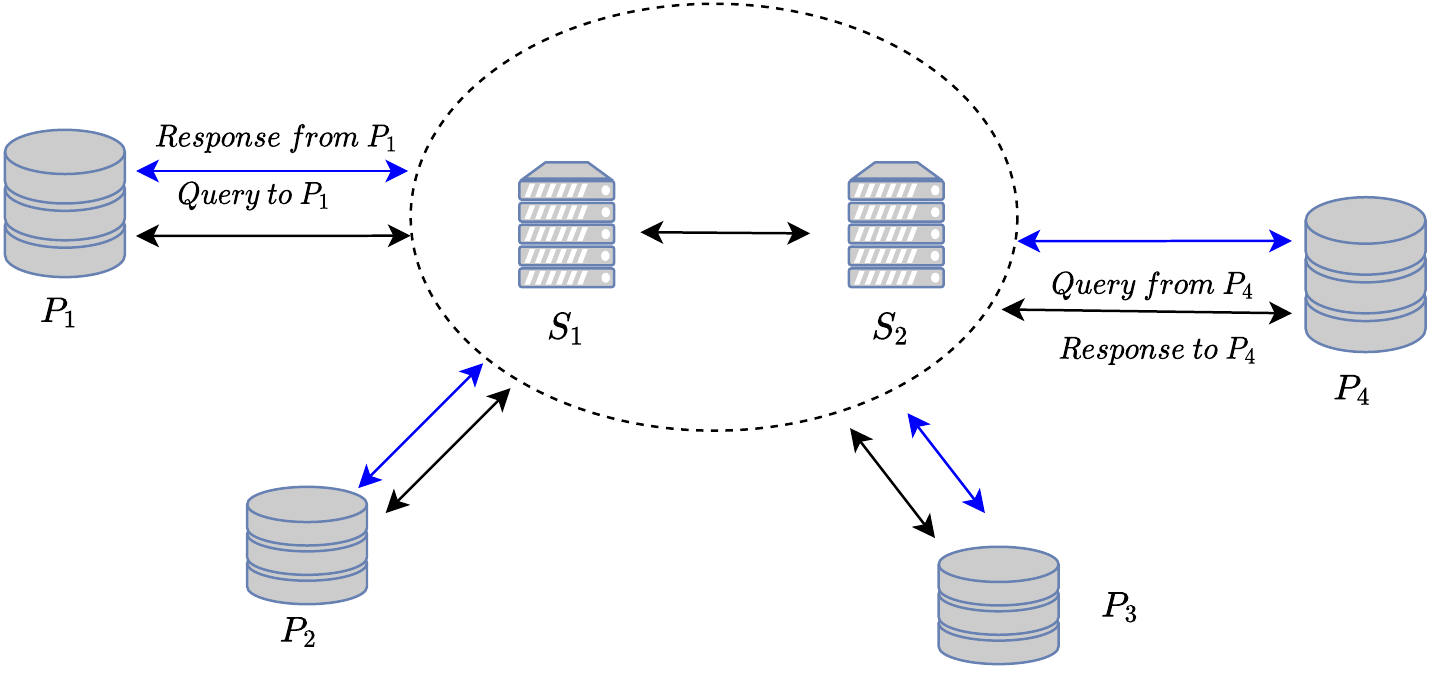}
\caption{DataRing with 4 participants $P_1$ to $P_4$ and 2 servers $S_1$, $S_2$, blue arrows indicate the data analysis of other participants on $P_i$'s dataset, black arrows indicate the data analysis of $P_i$ on other participants' dataset. \label{fig:system_model}
}
\end{figure}

\subsection{Threat model}\label{subsec:threat_model}
Our threat model is as follows. 
\begin{itemize}
    \item Each of the servers in $\mathcal{S}$ is honest-but-curious, i.e., each server performs the protocol steps faithfully but would like to infer information from the messages received. We also assume they do not collude. This means that $\mathcal{S}$ is honest as a whole.
    
    \item Each participant $P$ is potentially cheating, i.e., it can modify its true dataset $D$, and/or give arbitrary answer to any query on $D$. The exception is that $P$ honestly provides metadata including the domain (cf. Section~\ref{subsec:dataset}) and the size of its dataset.
    
    \item The participants do not collude with each other.
    
\end{itemize}

\subsection{Desired Properties}
\label{sub:desired}
DataRing is designed to satisfy the following properties.

\descr{Confidentiality.}
We have three main requirements

\begin{itemize}
    \item A participant's dataset is not disclosed to other participants and the server.
    \item The content of queries submitted by a participant to be evaluated on the dataset of another participant is not disclosed to other participants (including the one receiving the queries) and the server.
    \item Query responses can only be viewed by the inquiring participant.
\end{itemize}
The only exception to the above is background knowledge of the server about each participant's dataset, which we discuss in Section~\ref{subsec:assumptions}.

\descr{Individual Privacy.} Answers to queries should protect the privacy of individuals contributing their data to the dataset of a participant.

\descr{Correctness.} Query answers from participants should be as close to the true answers as possible.  

\subsection{Datasets and Queries}\label{subsec:dataset}
This section gives details of the dataset representation and query type.

\descr{Datasets.} We represent datasets as histograms over a public domain $\mathcal{D}$~\citep{Dwork2014}. More precisely, a dataset $D$ is an element of the set $\mathbb{N}^{|\mathcal{D}|}$. Each of the $i = 1, \ldots, |\mathcal{D}|$ members of the domain is called a \emph{record type} or \emph{label}. These are the possible types that a database can take as its records (with possibly multiple records of the same type). We assume that the labels in $\mathcal{D}$ are enumerated as $\{1, 2, \ldots, |\mathcal{D}|\}$ by a publicly known ordering, and hence we shall refer to a label simply by its index. It will often be convenient to denote the dataset as a set of label-value pairs $(i, \mathsf{val}(i))$, where $i \in \mathcal{D}$ is a label and $\mathsf{val}(i) \in \mathbb{N}$ is the number of times $i$ appears in the dataset $D$. The size $N$ of the dataset $D$ is the number of records in $D$ and is given by its $l_1$ norm, i.e., $N = \lVert D \rVert_1$. 

\begin{figure}
\begin{center}
    \begin{tabular}{c|c|c}
    \multicolumn{3}{c}{Dataset}\\
    \hline
    Gen. & Home & Loan\\
    \hline \hline
    F & Rent & 10K\\
    M & Own & 20K\\
    M & Rent & 10K\\
    F & Own & 20K \\
    \end{tabular}
    \quad
    \begin{tabular}{c|c|c|c|c|c|c|c|c}
    \multicolumn{9}{c}{Histogram Representation}\\
    \hline
    Label $i$ & $1$ & $2$ & $3$ & $4$ & $5$ & $6$ & $7$ & $8$\\
    \hline \hline
    $\mathsf{val}(i)$ & $1$ & $0$ & $0$ & $1$ & $1$ & $0$ & $0$ & $1$\\
    \end{tabular}
\end{center}
\caption{Example of a dataset and its histogram representation.} 
\label{fig:dataset_example}
\end{figure}
Figure \ref{fig:dataset_example} shows an example of a small dataset $D$ and its histogram representation. The domain consists of three attributes \textit{Gender, Home ownership, Loan amount}, which can take two possible values each, i.e., $\{Female, Male\}$, $\{Rent, Own\}$ and $\{10K,20K\}$, respectively. Thus, the domain has $2^3 = 8$ possible labels (record types). The dataset itself consists of four unique records. The histogram representation consists of all possible data points with labels, which are enumerated by some public ordering, e.g., $1 \leftrightarrow (F,Rent,10K), \ldots, 8 \leftrightarrow (M,Own,20K)$. The value of the label $i$, i.e.,  $\mathsf{val}(i)$ is $1$ if it is in the dataset, and $0$, otherwise.

\descr{Queries.} We restrict our focus to \emph{count} queries as these are powerful primitives for capturing many statistics from a database \citep[\S 3.3]{Dwork2014}. Namely, a query $Q$ on the data $D$ is defined as the sum of the result of a predicate on each row of $D$. We denote this by $Q(D)$. Notice that we can write query $Q$ as a binary vector $\in \{0, 1\}^{|\mathcal{D}|}$, and then $Q(D)$ as the dot product $\langle Q, D \rangle \in [0, N]$. 

\subsection{Building Blocks}\label{subsec:crypto}
For individual privacy, we employ differential privacy~\citep{calib-noise, Dwork2014} to respond to queries, and for confidentiality, we predominantly use a partially homomorphic encryption scheme. 

\subsubsection{Differential Privacy}
Here we highlight two important considerations:

\descr{Privacy Budget.} Each participant $P_i$ has a \emph{total} privacy budget (under differential privacy) for every other participant $P_j$, $j \neq i$, denoted $\epsilon_{i, j}$. Furthermore, participant $P_i$ also has a separate \emph{total} privacy budget for the server $\mathcal{S}$ to facilitate the query evaluation of $P_j$ on $P_i$'s dataset, denoted $\epsilon_{i, \mathcal{S}}$. For simplicity, we assume that $\epsilon_{i, j} = \epsilon_{i, \mathcal{S}}$ for all $j$. 

\descr{Individual's Record.} Our privacy guarantees for individuals in the dataset are tied to differential privacy. 
For simplicity, we assume that each record of the dataset belongs to a unique individual, i.e., each individual does not have more than one entry (e.g., transaction) in $D$. In case of multiple records from individuals, the privacy budget can be scaled accordingly. 

\subsubsection{Additive Homomorphic Encryption}\label{subsec:encryption}
We use the Elliptic Curve ElGamal (EC ElGamal) cryptosystem \citep{koblitz1987elliptic} \citep[\S 1.2.3]{guideToECC} as an additive homomorphic encryption scheme~\citep{phe_survey}. The scheme is described as follows. 

Let $\mathcal{E}$ be an elliptic curve over a finite field $\mathbb{F}_p$ and let $B$, a point on $\mathcal{E}(\mathbb{F}_p)$, be the generator of the cyclic subgroup of prime order $q$. Given a message $x$ from a plaintext space which is a subset of $[0, q-1]$, we define the mapping of $x$ to a point $X$ on the curve as $X = xB$. The reverse mapping, retrieving $x$ from $X = xB$ amounts to solving the discrete logarithm problem, and can be done efficiently, for a small message space via a lookup table \citep[\S 3.3]{Talos_ECEG}. EC ElGamal is semantically secure under the decisional Diffie-Hellman (DDH) assumption. 
Details of basic operations such as encryption, decryption, scalar multiplication of ciphertexts appear in Appendix~\ref{app:ec-elgamal}. Let $k$ denote the private key which is sampled uniformly at random from $[1, q-1]$. The public key is $K = kB$.

\descr{Additive homomorphic property.} \citep{Pilatus_ECEG,Talos_ECEG,Froelicher2017,Froelicher2019}
Given the EC ElGamal encryption of two messages $X_1 = x_1B$ and $X_2 = x_2B$ as $\mathsf{Enc}_K(x_1; r_1) = (r_1B, X_1 + r_1K)$ and $\mathsf{Enc}_K(x_2; r_2) = (r_2B, X_2 + r_2K)$, the addition of two ciphertexts is $\mathsf{Enc}(x_1 + x_2; r_1 + r_2) = ((r_1+r_2)B, X_1 + X_2 + (r_1 + r_2)K)$. The resulting ciphertext can be decrypted as $\mathsf{Dec}_k(\mathsf{Enc}_K(x_1 + x_2;r_1 + r_2)) = X_1 + X_2$, from which we can recover $x_1 + x_2$. 

\descr{Collective public key.}
We employ the concept of collective public key used in~\citep{Froelicher2017,Froelicher2019}. Namely, servers $\mathcal{S} = \{S_1, S_2\}$ having public-private key pairs $(K_i, k_i)$ sum their public key as $K_\mathcal{S} = \sum_{i=1}^{2}K_i$ to obtain the collective public key as a point on the curve, shared among all parties. Note that the private keys of each party are never shared with any other party. %

To decrypt a message $X$ (mapped on the curve) encrypted under the collective public key $K_\mathsf{S}$, all servers must participate. Notice that at no point in the decryption is the private key of any server shared with any other server. The $t$ servers can also collectively re-encrypt message $X$ to be under any public key $U$~\citep{Froelicher2017}. 

\descr{Threshold decryption.}~\citep{Froelicher2017}
Given the encryption of the message $X$ (mapped on the curve) as $(C^{(0)}_1, C^{(0)}_2) = (rB, X+rK_S)$ via the collective key $K_S$, the $t$ parties in $S$, iteratively decrypt the message using their private keys as follows. At step $i$, party $i$ updates:
\[
(C^{(i)}_1, C^{(i)}_2) = (C^{(0)}_1, C^{(i-1)}_2 - k_i C^{(0)}_1) 
                = (rB, X + rK_S - rK_i)
\]

Party $t$ then retrieves the message as $X = C_2^{(t)}$. Notice that at no point is the private key of any party shared with any other party. Furthermore, if all parties do not participate, then the message cannot be decrypted.

\descr{Re-encryption under another key.}
The $t$ servers can also collectively re-encrypt a message, encrypted under the collective key $K_\mathsf{S}$, to be under any public key $U$~\citep{Froelicher2017}. Given the encryption of the plaintext $X$ (mapped on the curve), as  $(C_1, C_2) = (rB, X + rK_S)$, the re-encryption is as follows. Define $(C^{(0)}_1, C^{(0)}_2) = (0, C_2)$. Each party $S_i \in S$ then samples $v_i$ uniformly at random from $[1, m-1]$ (nonce), and updates:
\[ 
(C^{(i)}_1, C^{(i)}_2) = (C^{(i-1)}_1 + v_iB, C^{(i-1)}_2 -rK_i +v_iU) 
\]

The $t$th server $S_t$ after performing the above computation, obtains the final encryption under $U$ as:
\[ 
(C^{(t)}_1, C^{(t)}_2) = (vB, X + vU) 
\]
where $v = v_1 + \cdots + v_t$. 
Notice that the re-encryption is done without any party being able to decrypt the content of the message.

\descr{Notation.} To avoid excessive notation, we shall write the encryption of a message $x \in [0, q-1]$ under public key $K$ as $\llbracket x \rrbracket_K$, where it is understood that fresh randomness is applied each time. Since the collective public key $K_\mathcal{S}$ shall be used for bulk of the encryption, we shall denote the encryption of $x$ under $K_\mathcal{S}$, i.e., $\llbracket x \rrbracket_{K_\mathcal{S}}$, simply as $\llbracket x \rrbracket$. 

\subsection{Assumptions and Limitations}
\label{subsec:assumptions}
We now discuss assumptions that we make in the desgin of DataRing and provide justification to each assumption.  

\descr{Unique Records in the Datasets.} We assume that the data domain $\mathcal{D}$ is public knowledge. For simplicity of analysis, we assume that each dataset $D$ is a collection of unique rows (regarding all attributes), i.e., there are no duplicate rows in the datasets. Thus if $N$ is the size of the dataset, then exactly $N$ labels have $\mathsf{val}() = 1$, and the remaining $|\mathcal{D}| - N$ labels have $\mathsf{val}() = 0$.

\noindent\textit{Justification:} For high-dimensional datasets, an overwhelming majority of the rows tend to be unique~\citep[\S 4]{Vincent2017}, and hence our analysis serves as a good approximation. We shall show in Section \ref{sec:exp_setup}, that the dataset~\citep{LendingClub} used in our system evaluation has 97\% of its of 2.26 million rows as unique. With this assumption, a dataset $D$ can be thought of a binary vector from the set $\{0, 1\}^{|\mathcal{D}|}$.

\descr{Background Knowledge of the Server.} 
We assume that the server $\mathcal{S}$ knows $L$ records of the participant $P$'s dataset, $D$. On the other hand, $P$ does not know which $L$ records are known by  $\mathcal{S}$. In other words, from $P$'s point-of-view, $\mathcal{S}$ knows a random sample of $L$ records from the dataset $D$. We call this the background knowledge $\mathcal{L}$ of $\mathcal{S}$ about $P$'s dataset. We also assume that sampling of $L$ records from dataset $D$ is done before the initialization of the data sharing, and is \textit{legally} mandated.

\noindent\textit{Justification:} Technically, this can be done by using a protocol similar to the partial view collection (cf. Sec \ref{subsec:pvc}), but with the assumption that participants do not deviate from their true datasets and servers can then decrypt the sampled subset $\mathcal{L}$. We refer the reader to the discussion in Section \ref{sec:intro} about the real-world considerations of the background knowledge of the server, as well as to the comparison with assumptions made in prior related work in Section \ref{sec:related_work}.   

\descr{On the Limits of Detecting Cheaters.} The ideal goal is to always detect any deviation from the true dataset. However, this is extremely challenging from a technical point of view. Instead, our guarantees of cheating detection are probabilistic. For instance, in order to ensure that the probability of avoiding detection is more than 0.95, the participant, with a dataset of size 500,000, needs to use 97.15\% of its true records in the partial view collection phase. Likewise, the participant needs to use its true dataset to answer \emph{all} queries to ensure that it avoids detection with a probability of more than 0.80. We believe that high profile organisations, e.g., banks, would prefer such cautious rates of avoiding detection (e.g., 0.95), due to the disproportionate nature of potential legal and reputational ramifications.  

\section{The DataRing}
\label{sec:system_model}

\subsection{System Overview}\label{subsec:system_overview}
Before giving a detailed description of DataRing, we briefly explain the system. The system consists of three main sequential phases as illustrated in Fig. \ref{fig:system_diagram}. 

\captionsetup[figure]{justification=centering}
\begin{figure}[t]
\centering
\includegraphics[width=0.8\linewidth]{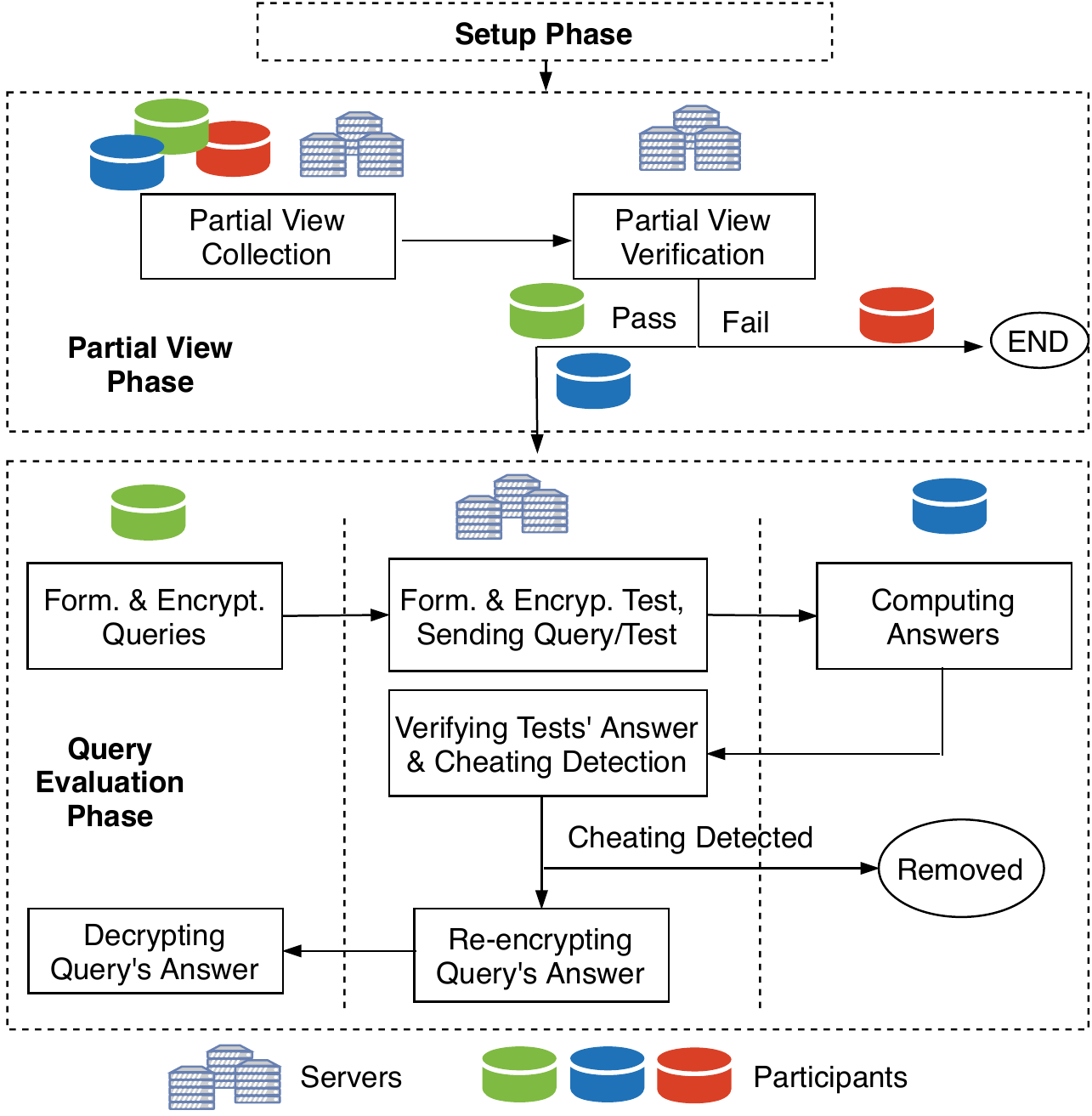}
\caption{DataRing's working flow. \label{fig:system_diagram}}
\end{figure}

\descr{Set-up Phase.} In this phase, all system parameters are initialized. Each participant publishes information such as its privacy budget and metadata. 

\descr{Partial View Phase.} 
In this phase, the server $\mathcal{S}$ collects a random subset of the participant $P$'s dataset (partial view) in such a way that $P$ does not know which records in its dataset are part of the partial view. 
Once the partial view has been obtained, the server $\mathcal{S}$ verifies if it is valid, i.e., if it is likely to have come from $P$'s true dataset $D$, based on the background knowledge that $\mathcal{S}$ knows about $D$.  
The participant can only proceed to next phase if its partial view is verified. The verified partial view is used as a reference to monitor the consistency of the participant's responses to queries. 

\descr{Query Evaluation Phase.}
In this phase, the server $\mathcal{S}$ ensures that each participant continues to use a dataset that is consistent with its partial view while answering other participants' queries. 
Specifically, $\mathcal{S}$ injects a number of \textit{hidden} test queries into the set of real queries that each participant must answer. Upon receiving participant's answers, $\mathcal{S}$ decrypts answers to the test queries to verify if they are consistent with $\mathcal{S}$'s background knowledge and the verified partial view. 
If answers to test queries are not consistent, the participant is detected as cheating. We call this process \textit{cheating detection}. The cheating participant is removed from the system and its answers are discarded. Answers to real queries from honest participants are re-encrypted by $\mathcal{S}$ using the honest querying participant's public key before being sent to it. Note that $\mathcal{S}$ never decrypts the answers to real queries. 

\subsection{Rationale Behind Construction}
There are alternative ways in which DataRing could be set up, which involve more participation from the servers. For instance, one alternative is to let the participant share its entire dataset with the server $\mathcal{S}$ albeit in its encrypted form (under the collective public key). $\mathcal{S}$ can then ensure the correctness of the entire dataset using its background knowledge. Moreover, $\mathcal{S}$ can compute answers to queries itself by adding appropriate differentially private noise via the Laplace mechanism (cf. Section~\ref{encrypted_question}), doing away with the additional query answer verification phase. However, there are two main issues with this approach. First, the participant now shares its entire encrypted dataset with the server $\mathcal{S}$, instead of an encrypted sample (partial view). Secondly, and more importantly, this modification requires the servers to generate Laplace noise to be added to the queries, thus revealing the exact noise added to the queries to the server. This partially violates the individual privacy and (query) confidentiality requirements outlined in Section~\ref{sub:desired}. In contrast, in our construction, the noise is added by the participant holding the dataset, and thus the servers never learn the approximate query answers or the noise added, apart from answers to the test queries, which only reveal what the servers already know, i.e., the background knowledge $\mathcal{L}$, the parial view size and the dataset size. 

\subsection{The Setup Phase}\label{setup}

The server $\mathcal{S}$ generates a collective public key $K_\mathcal{S}$ as discussed in Section~\ref{subsec:encryption}. 
The associated secret key is never reconstructed even when (collectively) decrypting any message encrypted under $K_\mathcal{S}$. 
Each participant $P_i$ generates its own public-private key pair and shares its public key with the server $\mathcal{S}$.
$P_i$ also publishes its privacy budget for every other participant and for the server $\mathcal{S}$ ($\epsilon_{i,j}$ and $\epsilon_{i,\mathcal{S}}$), its metadata (including dataset size $N_i$ and the domain $\mathcal{D}_i$), and the number of queries it wishes to evaluate on another participant $P_j$'s dataset ($|\mathcal{Q}_{i,j}|$). 

\section{Partial View Phase}
\label{sec:partial_view}

\subsection{Partial View Collection}
\label{subsec:pvc}
\subsubsection{Partial View Collection Protocol}
In the following, we describe how a partial view of size $V$ of a participant $P$'s dataset $D$ containing $N$ unique records is collected. 

\begin{enumerate}
\setlength{\itemsep}{1pt}
\item For each label $i$, participant $P$ creates a flag $f_i$. The flag is set to 1 if the corresponding label is \emph{in} the dataset ($\mathsf{val}(i)=1$)), otherwise it is set to 0.
The participant samples a random permutation $\sigma$ from $[|\mathcal{D}|]$ to itself, and creates the set $\mathsf{lbs} = \{(\sigma(i), f_i)\}_{i \in \mathcal{D}}$. 
The participant sends $\mathsf{lbs}$ to server $S_1$ and sends $\sigma^{-1}$ to server $S_2$. 

\item The server $S_1$ generates a random  $N$-element binary vector of Hamming weight $V$. This results in the $N$-element vector $\mathbf{u}$, given by $\mathbf{u} = \begin{pmatrix} u_1 & u_2 & \cdots & u_N 
\end{pmatrix}$
where exactly $V$ of the $u_j$'s are 1 and the rest are 0. Server $S_1$ then encrypts this using the key $K_\mathcal{S}$ as
\[
\llbracket \mathbf{u} \rrbracket = \begin{pmatrix} \llbracket u_1 \rrbracket & \llbracket u_2 \rrbracket & \cdots & \llbracket u_N \rrbracket
\end{pmatrix}
\]

\item For each pair $(j, f)$ in $\mathsf{lbs}$, the server $S_1$ does as follows. If $f = 1$, it pops an element $\llbracket u \rrbracket$ of $\llbracket \mathbf{u} \rrbracket$ and replaces $(j, f)$ with $(j, \llbracket u \rrbracket)$. Otherwise, if $f = 0$, it replaces $(j, f)$ with $(j, \llbracket 0 \rrbracket)$, where $\llbracket 0 \rrbracket$ is a fresh encryption of $0$ (under $K_\mathcal{S}$). The server $S_1$ then sends the modified $\mathsf{lbs}$, denoted $\llbracket \mathsf{lbs} \rrbracket$, to $S_2$.

\item The server $S_2$ updates each pair $(j, \llbracket f \rrbracket)$ of $\llbracket \mathsf{lbs} \rrbracket$ with $(j, \llbracket f \rrbracket + \llbracket 0 \rrbracket )$ effectively re-randomizing all encryptions. Server $S_2$ then further updates $\llbracket \mathsf{lbs} \rrbracket$ by applying the inverse permutation $\sigma^{-1}$ on each pair $(j, \llbracket f \rrbracket)$ as $(\sigma^{-1}(j), \llbracket f \rrbracket)$.
Finally $S_2$ shares the partial view (PV) with server $S_1$ defined as:
\[
\text{PV} = \begin{pmatrix}
\llbracket u_1 \rrbracket & \llbracket u_2 \rrbracket & \cdots & \llbracket u_{|\mathcal{D}|} \rrbracket
\end{pmatrix}
\]
where $u_i$ is the encrypted flag of the $i$th label in $ \mathsf{lbs} $. Note that we now have the original enumeration of the labels. 
\end{enumerate}

These steps are pictorially represented in Fig.~\ref{fig:pv} with simplified notation. Once the servers have obtained PV, they delete $\mathsf{lbs} = \{(\sigma(i), f_i)\}_{i \in \mathcal{D}}$ and $\sigma^{-1}$ which are never reused.

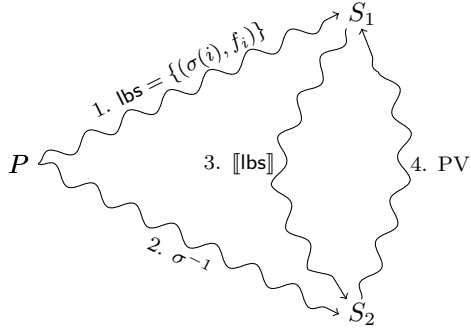
\begin{figure}[h]
\centering
\begin{tikzpicture}
\draw [->,decorate,decoration={snake,amplitude=1mm,segment length=6mm,post length=1mm}] (0,2) -- (4,4) node[pos=0, left] {$P$} node [pos=.5, sloped, right, above] {\footnotesize{1. $\mathsf{lbs} = \{(\sigma(i), f_i)\}$}} node[right] {$S_1$};
\draw [->,decorate,decoration={snake,amplitude=1mm,segment length=6mm,post length=1mm}] (0,2) -- (4,0) node[pos=0, left] {$P$} node [pos=.5, sloped, right, below] {\footnotesize{2. $\sigma^{-1}$}} node[right] {$S_2$};
\draw [->,decorate,decoration={snake,amplitude=1mm,segment length=6mm,post length=1mm}] (4.1,3.8) .. controls (3,2) .. (4.1,0.2) node [pos=.5, left] {\footnotesize{3. $\llbracket \mathsf{lbs} \rrbracket$}};
\draw [->,decorate,decoration={snake,amplitude=1mm,segment length=6mm,post length=1mm}] (4.3,0.2) .. controls (5,2) .. (4.3,3.8) node [pos=.5, right] {\footnotesize{4. PV}};
\end{tikzpicture}
\caption{Steps involved in Partial View Collection.}
\label{fig:pv}
\end{figure}

\begin{rem}\label{rem:pir_pv}
The idea behind our partial view collection scheme is reminiscent of Private Information Retrieval (PIR) technique \citep{pir_sergey}. 
However, PIR techniques are not applicable to our setting. In our setting, the server $S$ does not know the identities of actual elements of the participant's dataset $D$ which is essential for PIR to work. $S$ might use the public knowledge of all possible elements from the domain $\mathcal{D}$ and apply PIR to sample the partial view. Nonetheless, it could end up with drawing all elements that are not in $D$. Moreover, existing PIR schemes either allow the user to retrieve a single element at once \citep{SPIR} or require replicated databases \citep{PIR_1995} and are computationally expensive as a database must process all of its entries~\citep{xpir}.
\end{rem}

\subsubsection{Security Analysis}
We require that a cheating participant $P$ does not learn which of the $V$ records in its announced dataset $D'$ (potentially different from $D$) are in the partial view. Hence, the partial view, from the point-of-view of $P$, is a random sample of size $V$ of the $N$ records in $D'$. We also require that the dataset $D'$ is well-formed, i.e., has exactly $N$ unique records. 
On the other hand, we require that the servers in $\mathcal{S}$ do not learn which of the $N$ records in the domain are part of the dataset, and the random $V$-element sample itself.
The only requirement is that the servers should be able to jointly decrypt the ciphertexts of the labels from $\mathcal{L}$ in PV. 

In Appendix \ref{app:pvc}, we provide detailed proofs of the security of the partial view collection phase using the simulation paradigm~\citep{sim-tut}, where we define an ideal functionality for collecting the partial view. 

\begin{thm}
If EC-ELGamal is semantically secure under the decisional Diffie Hellman (DDH) assumption, our protocol in Section~\ref{subsec:pvc} securely collects the partial view.
\end{thm}

The above result is under the assumption that while the participant $P$ may deviate from its true dataset $D$, it does produce a valid permutation-inverse permutation pair $(\sigma, \sigma^{-1})$. The justification for this assumption is due to Theorem~\ref{the:fakeD}. In short, the adversary gains no advantage in using an invalid inverse permutation.

\begin{thm}
\label{the:fakeD}
Let $\mathcal{B}$ be an adversary which given the domain $\mathcal{D}$, the dataset $D$, and parameters $N$, $V$ and $L$, outputs $\{ (i,f_i)\}_{i \in \mathcal{D}}$, $f_i \in \{0, 1\}$ with $\sum_i f_i = N$, and permutation $\pi$, resulting in the dataset $D'$. Then there exists an adversary $\mathcal{A}$, given the same inputs, which chooses a random permutation $\sigma'$, and outputs $\{ (\sigma'(j),f'_j)\}_{j \in \mathcal{D}}$ and the inverse permutation $\sigma'^{-1}$, resulting in the same dataset. 
\end{thm}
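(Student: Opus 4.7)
The plan is to show that any adversary $\mathcal{B}$ who submits an arbitrary permutation $\pi$ (not necessarily the inverse of any claimed permutation) can be emulated by an adversary $\mathcal{A}$ that follows the honest permutation structure, producing the same final dataset $D'$. The key insight is that the dataset reconstructed at the end of the protocol depends only on the composition of the participant's label-permutation and the server-applied permutation: a consistent pair $(\sigma', \sigma'^{-1})$ can mimic any inconsistent pair by appropriately redistributing the flags before the permutation is applied.

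First, I would trace through the protocol with $\mathcal{B}$'s outputs. After $\mathcal{B}$ sends the pairs $\{(i, f_i)\}_{i \in \mathcal{D}}$ to $S_1$ and $\pi$ to $S_2$, the server $S_2$ applies $\pi$ to each pair, yielding $(\pi(i), \llbracket f_i \rrbracket)$ for each $i$. Hence the effective (binary) dataset produced by $\mathcal{B}$ satisfies $D'(j) = f_{\pi^{-1}(j)}$ for all $j \in \mathcal{D}$, i.e., $D' = \{\pi(i) : f_i = 1\}$.

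Next, I would construct $\mathcal{A}$ as follows. Given the same inputs $(\mathcal{D}, D, N, V, L)$, $\mathcal{A}$ runs $\mathcal{B}$ internally to obtain $(\{(i, f_i)\}, \pi)$. It samples a permutation $\sigma'$ uniformly at random from $[|\mathcal{D}|]$ to itself, and sets $f'_j := f_{\pi^{-1}(j)}$ for every $j \in \mathcal{D}$. Finally, $\mathcal{A}$ outputs the pairs $\{(\sigma'(j), f'_j)\}_{j \in \mathcal{D}}$ together with $\sigma'^{-1}$. I would then check the structural constraints: $f'_j \in \{0,1\}$ is immediate from $f_i \in \{0,1\}$, and $\sum_j f'_j = \sum_j f_{\pi^{-1}(j)} = \sum_i f_i = N$ because $\pi^{-1}$ is a bijection on $\mathcal{D}$, so $\mathcal{A}$'s output conforms to the specified format.

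Finally, I would verify that executing the protocol with $\mathcal{A}$'s outputs recovers the same $D'$: when $S_2$ applies $\sigma'^{-1}$ to the pairs $(\sigma'(j), \llbracket f'_j \rrbracket)$, it obtains $(j, \llbracket f'_j \rrbracket)$, so the reconstructed dataset satisfies $D'(j) = f'_j = f_{\pi^{-1}(j)}$, matching $\mathcal{B}$'s dataset exactly. There is no substantial technical obstacle here; the only subtlety is conceptual, namely noticing that composing the dataset-defining flag assignment with $\pi^{-1}$ absorbs all of the ``cheating'' freedom of $\mathcal{B}$, so the choice of $\sigma'$ by $\mathcal{A}$ can be made independently and uniformly without loss. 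This establishes the claimed equivalence and justifies restricting the security analysis to adversaries who produce valid permutation-inverse pairs.
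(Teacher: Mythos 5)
Your proof is correct and takes essentially the same route as the paper's: both absorb $\pi$ into the flag assignment (the paper sets $f'_{\pi(i)} = f_i$, which is exactly your $f'_j = f_{\pi^{-1}(j)}$) and then emit a fresh consistent pair $(\sigma', \sigma'^{-1})$. You merely add explicit verification of the count constraint and of the reconstructed dataset, which the paper leaves as ``clearly.''
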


\begin{proof}
For each $i$, set $j = \pi(i)$, and set $f'_j = f_i$. Sample a random permutation $\sigma'$. Output $\{ (\sigma'(j),f'_j)\}_{j \in \mathcal{D}}$, and $\sigma'^{-1}$. Clearly, the two result in the same dataset.  
\end{proof}

\subsection{Partial View Verification}
\subsubsection{Overview}\label{subsubsec:overview}

The partial view is used as a reference for monitoring the participant's answers to queries in the Query Evaluation Phase (cf. Section~\ref{sec:query_computation}). Thus, the server needs to verify if the partial view is indeed likely to have come from the true dataset $D$. This is mainly accomplished by the server checking if at least a threshold number of records in its background knowledge $\mathcal{L}$ are present in the partial view. 
 
Formally, the partial view verification phase is as follows:

\begin{enumerate}
    \item For each label $l$ in their background knowledge $\mathcal{L}$, the servers in $\mathcal{S}$ jointly decrypt the ciphertext $\llbracket u_l \rrbracket$ in PV.  
    \item If all the decrypted values are 0 or 1, and if at least $r_0 \in [1,L]$ of the $L$ records in $\mathcal{L}$ are found in PV,  $\mathcal{S}$ allows $P$ to proceed to the next phase. Otherwise, $P$ is rejected. 
\end{enumerate}

\subsubsection{Security Analysis} 
Since the partial view is a random sample of a participant's dataset, not all $L$ records in $\mathcal{S}$'s background knowledge ($\mathcal{L}$) are expected to be in PV. 
Let $R$ be the random variable denoting the number of records in $\mathcal{L}$, found in PV. From $P$'s view point, $L$ records in $\mathcal{L}$ are randomly sampled from the true dataset of $P$. As a result, $R$ can be viewed as a discrete random variable following a hypergeometric distribution with parameters $N, V$ and $L$ \citep{hypergeo_book,rice2006mathematical}. 

\descr{Balancing True and False Positives.} Even if the participant's dataset is the true dataset, there is a non-zero probability that none of the $L$ records are present in PV. Such a partial view would be falsely rejected. Thus, the server $\mathcal{S}$ checks if at least a threshold $r_0$ number of records from $\mathcal{L}$ are present in PV to balance between true and false positives. 

With this, the probability that $P$'s partial view passes verification is given by
\begin{equation}\label{eq:R_r_0}
    \mathrm{Pr}(R\geq r_0) = \sum_{r=r_0}^L \frac{\binom{V}{r}\binom{N-V}{L-r}}{\binom{N}{L}}
\end{equation}

\descr{Choosing $r_0$}. We fix a tolerated false positive rate $\eta$ (i.e., probability of falsely rejecting a partial view from honest participant). Then, $\mathcal{S}$ chooses $r_0$ such that the partial view from an honest participant is likely to include at least $r_0$ records from $\mathcal{L}$ with probability of at least $1-\eta$. 
From Eq.~(\ref{eq:R_r_0}), this means
\begin{equation}\label{eq:PR_V}
    \mathrm{Pr}(R \geq r_0) = \sum_{r=r_0}^L \frac{\binom{V}{r}\binom{N-V}{L-r}}{\binom{N}{L}} \geq 1- \eta
\end{equation}

\begin{rem}
When $\eta = 0$, the solution for $\mathrm{Pr}(R \geq r_0) = 1$ is $r_0 = 0$. This means an honest participant will pass verification with probability 1. On the other hand, this also means that a dishonest participant will also pass the verification. We, therefore, choose $\eta \in (0, 1)$, which means that $r_0 \ge 1$.   
\end{rem}

From this, $r_0$ is calculated as 

\begin{equation}\label{eq:fp_r0}
    r_0 = \max \left\{r\in [1, L] : \sum_{r=r_0}^L \frac{\binom{V}{r}\binom{N-V}{L-r}}{\binom{N}{L}} \geq 1-\eta \right\}
\end{equation}

\descr{Lower bound of the background knowledge's size.} For given dataset size $N$, partial view size $V$ and a tolerated false positive rate $\eta$, to ensure the threshold $r_0 \ge 1$ there is a lower bound for the size of the background knowledge. Note that from the condition for a partial view passing the verification $Pr(R\ge r_0) \ge 1 - \eta$, we have $r_0 \ge 1 \iff Pr(R = 0) < \eta$. Thus, to determine the absolute minimum value of $L$ we end up with: $$L_{\text{min}} = \min \left\{L \in [1,N] : \prod_{k=0}^{V=\rho N}{\frac{(N-k-L)}{(N-k)}} < \eta \right\}$$

\begin{rem}\label{rem:size_of_backgroundKnow}
Crucially, the minimum number of records in the background knowledge $L_{\text{min}}$ does not increase with the dataset size, as long as the ratio between the partial view size and the dataset size ($\rho = V/N$) remains fixed. Thus, the servers do not need to sample a larger $L_{\text{min}}$ for larger datasets. For instance, if $\eta = 0.05$ and $\rho = 0.01$, the lower bound remains $L_{\text{min}} = 299$ for dataset sizes $N = [500K, 1M, 1.5M, 2M]$.
\end{rem}

\section{Query Evaluation Phase}
\label{sec:query_computation}

\subsection{Overview}\label{subsec:overview_query_phase}
In this phase, all participants passing partial view verification proceed to analyze each other's datasets. The analysis consists of a set of queries and their answers. The server $\mathcal{S}$ actively participates in the data analysis process to ensure that the analysis is consistent with the (already verified) partial view in a privacy-preserving manner. $\mathcal{S}$ ensures this by conducting a ``cheating detection'' process through hidden tests during this phase. 

Assume that participant $P_i$ wishes to analyze participant $P_j$'s dataset $D_j$, facilitated by $\mathcal{S}$. The query evaluation phase consists of the following steps.

\begin{enumerate}
    \item Participant $P_i$ sends $m_\mathsf{q}$ encrypted queries to the server $\mathcal{S}$ to be evaluated on $P_j$'s dataset, denoted $\mathcal{Q}_{i,j} = \{Q_1,\ldots, Q_{m_\mathsf{q}}\}$.
    
    \item To monitor whether $P_j$ computes queries' answers using a dataset consistent with its partial view, the server $\mathcal{S}$ conducts \emph{hidden tests}. More specifically, $\mathcal{S}$ generates and encrypts $m_\mathsf{t} \leq m_\mathsf{q}$ test queries for $P_j$, denoted $\mathcal{T}_j = \{T_1,\ldots, T_{m_\mathsf{t}}\}$, based on its background knowledge about $D_j$ and $P_j$'s submitted partial view. $\mathcal{S}$ knows the true answer of each test. However, the answers returned by the participant are noisy, and hence exact answers to test queries are not obtained. 
    Thus, we define ``expected answer'' of a test query as any value that does not differ from the true answer by more than a maximum noise amount that can be added to a query answer (cf. Section \ref{subsec:test_functions}). 
     
     \item $\mathcal{S}$ bundles the real queries in $\mathcal{Q}_{i,j}$ and test queries in $\mathcal{T}_j$ together (cf. Section \ref{encrypted_question}), resulting in a total of $m = m_\mathsf{q} + m_\mathsf{t}$ queries. $\mathcal{S}$ then randomly pops a query from these $m$ queries and sends it to $P_j$ at each iteration. Thus, the order of test queries and real queries sent to the participant is random.
     
     \item Upon receiving an encrypted query, $P_j$ evaluates it on its dataset, adds Laplace noise of scale $m_q/\epsilon$ to the answer and then sends the final answer back to $\mathcal{S}$. Note that $P_j$ has separate privacy budgets for both the server ($\epsilon_{j,S}$) and the participant $P_i$ ($\epsilon_{j,i}$) with $\epsilon_{j,S} = \epsilon_{j,i} = \epsilon$, and therefore, the privacy leakage of $P_j$'s dataset is within the overall budget specified.
    
     \item Answers from $P_j$ are held at the server $\mathcal{S}$. When all query answers are obtained, $\mathcal{S}$ decrypts answers of test queries to verify if they are the expected answers. If $P_j$ fails to provide expected answer to any test query, it is detected as cheating. 
     Otherwise, $P_j$'s answers of real queries are re-encrypted under the participant $P_i$'s public key. 
     
\end{enumerate}

\begin{rem}\label{rem:query_compute}
In the query evaluation phase, encrypted queries are sent to the participant who is able to respond to the query without learning its content. This may appear similar to the PIR setting~\citep{pir_sergey, xpir}. However, PIR scheme is not applicable in our setting as it either requires the dataset to be sent to the querying participant or requires the querying participant to know the exact indexes of the data records that they wishes to retrieve. In our setting, a participant's dataset is never shared with the servers and other participants.
\end{rem}

\subsection{Test Queries}\label{subsec:test_functions}

The analysis in this section applies to all participants. We, therefore, remove the subscripts to identify individual participants. Let $T$ be a test query, and let $\epsilon$ denote the privacy budget of a participant for the server.
As before, $m_\mathsf{q}$ denotes the number of real queries. The number of test queries is upper bounded by the number of real queries, $m_{\mathsf{t}} \le m_{\mathsf{q}}$. This is because, the participant should not be able to distinguish between the real and test queries, and therefore, should not spend more than the allocated privacy budgets on $m_{\mathsf{q}}$ and $m_{\mathsf{t}}$.
Let $\Delta T$ denote the sensitivity of the test query. Since all queries are count queries, we have $\Delta T = 1$. Let `ans' denote the noisy answer to the test query $T$ from the participant. Thus, the server defines $\text{noise}_{\max} = \ln(1/\delta)\frac{m_{\mathsf{q}} \Delta T}{\epsilon}$. This means that the (Laplace) noise added to a test query answer is less than $\pm \text{noise}_{\max}$ with probability $1 - \delta$~\citep[\S 3.3]{Dwork2014}. 
We have three broad categories of tests (test queries) to counter several cheating strategies. 

\descr{Tests based on the Servers' Background Knowledge ($\mathcal{L}$).}  This test aims at confirming the presence of $L$ records from $\mathcal{L}$ in $P$'s dataset. Specifically, the test query  counts the number of records in $P$'s dataset that match records in $\mathcal{L}$. The acceptance range of an answer to this test is $ \text{ans} \in [L \pm \text{noise}_{\max}]$. We call this \textit{Test $L$}. 

\descr{Tests based on Partial View.} This hidden test aims at verifying if $P$ computes answers to test queries using a dataset consistent with its partial view. The test query counts the number of records in $P$'s dataset matching records in $P$'s partial view. If the participant is honest, this should exactly equal to $V$ modulo some noise. Thus, the acceptance range of an answer to this test is $\text{ans} \in [V \pm \text{noise}_{\max}]$. We call this test query \textit{Test $V$}.

\descr{Tests based on the Dataset Size.} 
This test aims at detecting a cheating participant that adds arbitrary records to the true dataset. The test query counts the total records in the dataset. As $\mathcal{S}$ knows the exact size of $P$'s dataset, the acceptance range of an answer to this test is $\text{ans} \in [N \pm \text{noise}_{\max}]$. We denote this test query as \textit{Test $N$}.

\descr{Justification for Test Queries.} All three test queries are necessary.
\textit{Test $L$} alone is not enough to counter all cheating strategies because the participant can simply return all noisy answers (even to real queries) close to $L$.
Similarly, using only \textit{Test $V$} can detect the cheater answering noisy answers close to $L$ but still cannot detect the cheater who evaluates queries on a larger dataset (adding additional fake rows beyond the actual number of rows $N$). Using \textit{Test $N$} alone cannot  detect a cheater who always uses a fake dataset which has same size as the true dataset. Thus, we use the combination of all test types to detect cheating participants.

\subsection{Query Formulation \& Answer Computation}\label{encrypted_question}
This section describes the formulation of a query (real query and test query), its encrypted version and how the query answer is computed.

\descr{Encrypted query formulation.} Let $Q \in \mathcal{Q} \cup \mathcal{T}$ denote a test or a real query. The query $Q$ in vector form is $Q = \begin{pmatrix} q_1 & q_2 & \cdots & q_{|\mathcal{D}|} 
\end{pmatrix}$,
where $q_i = 1$ if label $i$ is to be counted, and $0$ otherwise. Recall that mapping a label (data point) to be evaluated in a query is based on the publicly known ordering (cf. \ref{subsec:dataset}). The queries are encrypted as
\[
\llbracket Q \rrbracket = \begin{pmatrix} \llbracket q_1 \rrbracket & \llbracket q_2 \rrbracket & \cdots & \llbracket q_{|\mathcal{D}|} \rrbracket \end{pmatrix}
\]
Note that the real queries are encrypted by the participant (who wishes to analyse another participant's data) and then sent to the server.
Thus, none of the servers in $\mathcal{S}$ learn the content of the real queries. The server then iteratively pops a random query from $\mathcal{Q} \cup \mathcal{T}$ and sends it to the targeting participant. 

With this formulation, a test query $T = \begin{pmatrix} q_1 & q_2 & \cdots & q_{|\mathcal{D}|}\end{pmatrix}$ is formed as follows. For the \textit{Test $L$} query, $q_i = 1$ if and only if $i$ is one of the $L$ labels in the server's background knowledge. For the \textit{Test $N$} query, $q_i = 1$ for all $i$. The server $\mathcal{S}$ then encrypts $q_i$ using $K_\mathcal{S}$ to obtain corresponding encrypted test $\llbracket T \rrbracket$.
If $T$ is the \textit{Test $V$} query, the server $\mathcal{S}$ sets $\llbracket q_i \rrbracket = \llbracket u_i \rrbracket + \llbracket 0 \rrbracket$, for each $i$, where $\llbracket u_i \rrbracket$ is the $i$th element of the (encrypted) partial view, and $\llbracket 0 \rrbracket$ is the fresh encryption of $0$. Then, $\mathcal{S}$ sets $\llbracket T \rrbracket = \begin{pmatrix} \llbracket q_1 \rrbracket & \cdots & \llbracket q_{|\mathcal{D}|}  \rrbracket 
\end{pmatrix}$ as the encrypted \textit{Test $V$} query, which is the re-encryption of the partial view. 
Note that $\mathcal{S}$ can reuse a test query by re-randomizing all the encryptions in the encrypted test $\llbracket T \rrbracket$ by adding $\llbracket 0 \rrbracket$ to them. 

\descr{Answer computation.} Once participant $P$ receives an encrypted query $\llbracket Q \rrbracket$ from $\mathcal{S}$, it computes query answer on its dataset $D$ as follows
\begin{enumerate}
    
    \item It initializes $\llbracket \text{ans} \rrbracket \leftarrow 0$. For each $i \in \mathcal{D}$ such that $\mathsf{val}(i) = 1$, it updates $\llbracket \text{ans} \rrbracket \leftarrow \llbracket \text{ans} \rrbracket + \llbracket q_i \rrbracket$.
    \item It draws $\text{noise} \leftarrow \text{Lap}\left(\frac{m_\mathsf{q}}{\epsilon}\right)$, encrypts it as $\llbracket \text{noise} \rrbracket$, and updates $\llbracket \text{ans} \rrbracket \leftarrow \llbracket \text{ans} \rrbracket + \llbracket \text{noise} \rrbracket$.
    \item The participant sends $\llbracket \text{ans} \rrbracket$ to the server $\mathcal{S}$.
\end{enumerate}
 
Note that although the query includes $|\mathcal{D}|$ elements, the participant only needs to process $N$ of them corresponding to its dataset $D$, since $\langle Q, D \rangle = \sum_{i : \mathsf{val}(i) = 1} q_i$.

\subsection{Query Answer Release} \label{subsec:answer_release}
The server $\mathcal{S}$ conducts cheating detection with all participants in the system.
After the query evaluation of any two participants on each other's dataset, $\mathcal{S}$ releases answers to each participant based on the output of the cheating detection process. If no cheating is detected from both participants, $\mathcal{S}$ re-encrypts all answers to under inquiring participants' public keys. $\mathcal{S}$ then sends re-encrypted answers to the inquiring participants which can decrypt the answers using their secret keys. If any participant is detected as cheating, it will be removed from the system. Then, all query answers of both participants are discarded.  Thus, answers from an honest participant are only sent to honest queriers.

\subsection{Cheating Detection Analysis}
\label{subsec:cheating_detection}

Given the probabilistic nature of EC-ElGamal encryption, all encrypted queries are indistinguishable. More formally, we prove the following theorem in Appendix~\ref{app:tvsr}.
\begin{thm}
If EC-ElGamal is semantically secure, then the encrypted queries from $\mathcal{Q}$ are indistinguishable from encrypted queries from $\mathcal{T}$.
\end{thm}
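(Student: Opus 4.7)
The plan is to reduce the indistinguishability of encrypted queries from $\mathcal{Q}$ and $\mathcal{T}$ to the semantic security of EC-ElGamal via a standard hybrid argument over the $|\mathcal{D}|$ coordinates of the query vectors. Observe first that every element of $\mathcal{Q}\cup\mathcal{T}$ has the same syntactic shape: a length-$|\mathcal{D}|$ vector of EC-ElGamal ciphertexts, each encrypting a bit under the collective public key $K_\mathcal{S}$, and each produced with fresh encryption randomness. For the real queries this is immediate from the formulation in Section~\ref{encrypted_question}. For \emph{Test $L$} and \emph{Test $N$} queries the server $\mathcal{S}$ encrypts the bits directly. For \emph{Test $V$} queries, each coordinate is $\llbracket u_i \rrbracket + \llbracket 0 \rrbracket$; by the additive homomorphism and the uniform choice of the fresh randomness in the encryption of $0$, the resulting ciphertext is distributed identically to a fresh encryption of the plaintext $u_i\in\{0,1\}$. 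Thus, in every case, a query is simply a vector of $|\mathcal{D}|$ independent, freshly sampled EC-ElGamal encryptions of bits.

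Fix any real query $Q\in\mathcal{Q}$ with bit vector $(q_1,\dots,q_{|\mathcal{D}|})$ and any test query $T\in\mathcal{T}$ with bit vector $(t_1,\dots,t_{|\mathcal{D}|})$. Consider the hybrid sequence $H_0,H_1,\dots,H_{|\mathcal{D}|}$, where $H_j$ is the vector whose first $j$ coordinates are fresh encryptions of $t_1,\dots,t_j$ and whose remaining coordinates are fresh encryptions of $q_{j+1},\dots,q_{|\mathcal{D}|}$. Then $H_0$ is distributed as $\llbracket Q\rrbracket$ and $H_{|\mathcal{D}|}$ is distributed as $\llbracket T\rrbracket$. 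Any efficient distinguisher between consecutive hybrids $H_{j-1}$ and $H_j$ that differ only in the $j$th coordinate can be turned into a semantic-security distinguisher for EC-ElGamal: the reduction receives a challenge ciphertext encrypting either $q_j$ or $t_j$ under $K_\mathcal{S}$, places it in position $j$, and freshly encrypts the remaining coordinates itself using $K_\mathcal{S}$ (which it can do, since $K_\mathcal{S}$ is public). A hop between hybrids therefore costs at most the semantic-security advantage, and a standard union bound over $j=1,\dots,|\mathcal{D}|$ yields indistinguishability of $\llbracket Q\rrbracket$ from $\llbracket T\rrbracket$ up to a factor of $|\mathcal{D}|$ in advantage.

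The only subtlety I anticipate is handling the \emph{Test $V$} case carefully: its coordinates are derived from the partial view rather than sampled afresh, so one might worry that correlations with the participant's view leak information. The point, established in the first paragraph, is that the rerandomization step $\llbracket u_i\rrbracket + \llbracket 0\rrbracket$ gives a ciphertext whose joint distribution with everything the adversary already holds is identical to that of a fresh encryption of $u_i$; hence the hybrid reduction applies verbatim. Since the argument must cover arbitrary $Q\in\mathcal{Q}$ and $T\in\mathcal{T}$, and the sets of queries are efficiently samplable by the adversary together with the server's public parameters, the reduction can be made uniform in the choice of $Q,T$, which completes the proof.
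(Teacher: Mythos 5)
Your proof is correct, but it takes a different route from the paper's. The paper reduces in the other direction of generality: it fixes a single, adversarially convenient pair of queries (the all-zero vector as the ``real'' query and the vector $(1,0,\dots,0)$ as the ``test'' query, differing in exactly one coordinate), embeds the semantic-security challenge ciphertext $\llbracket m_b\rrbracket$ in that one position, pads the rest with encryptions of zero, and also ``parameterizes'' the semantic-security game by the probability $m_\mathsf{q}/m$ that the challenger sends a real query. That argument shows distinguishing \emph{that particular} pair breaks semantic security, but it does not by itself cover arbitrary $Q\in\mathcal{Q}$ and $T\in\mathcal{T}$ that may differ in many coordinates. Your coordinate-wise hybrid $H_0,\dots,H_{|\mathcal{D}|}$ closes exactly that gap: it handles arbitrary query pairs at the cost of a $|\mathcal{D}|$ multiplicative loss in advantage, and it stays within the standard (unbiased) IND-CPA game. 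You also explicitly justify the one point the paper leaves implicit, namely that a \emph{Test $V$} coordinate $\llbracket u_i\rrbracket+\llbracket 0\rrbracket$ is distributed as a fresh encryption of $u_i$ (up to the negligible technicality that the randomness is drawn from $[1,q-1]$ rather than all of $\mathbb{Z}_q$, so the sum of nonces is uniform over a punctured set); this is needed for the claim that all queries have the same syntactic shape before the hybrid argument can be applied. In short: your proof is more general and arguably the one the theorem statement actually requires, while the paper's is a tighter but narrower worst-case-pair reduction.
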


Now, given that the queries are indistinguishable, a cheating participants strategy is to randomly return incorrect answers to one or more queries and ``hope'' that they are not test queries. We make this more precise in the following. At iteration $j$, let $Q_j \in \mathcal{Q} \cup \mathcal{T}$ denote the query sent to the participant. Let $p_\mathsf{t}$ be the probability that $Q_j$ is a test query. Let $p_\mathsf{c}$ denote the probability that a cheating participant decides to cheat at iteration $j$. Finally, let $p_\mathsf{d}$ be the probability that the cheating participant is caught at this iteration. Notice that $p_\mathsf{d}$ is defined over all test queries and possible cheating strategies deployed by the cheating participant. 

We are interested in the probability that the server $\mathcal{S}$ successfully detects a cheating participant $P_c$ after monitoring $m$ queries in $\mathcal{Q} \cup \mathcal{T}$, denoted by $\Pr(\text{success})$.
Let $A_j$ be the event that $P_c$ is not caught at iteration $j$. Then 
\[
\Pr (A_j) = 1- p_\mathsf{t} p_\mathsf{c} p_\mathsf{d}.
\]

From this, $\Pr(\text{success})$ is given as
\begin{equation} 
    \Pr (\text{success}) = 1 - \prod_{j=1}^{m} \Pr \left(A_j\right) = 1 - (1- p_\mathsf{t} p_\mathsf{c} p_\mathsf{d})^m
    \label{eq:prob_detect_lie}
\end{equation}

Given $m$, $p_t$ and $p_d$, the more number of times a participant cheats during the query evaluation (i.e., the higher $p_c$), the higher the probability that the server successfully detects cheating. In contrast, when the participant cheats a few times (i.e., very small $p_c$), it is more likely that the participant provides incorrect answers to real queries in which the server cannot verify. Consequently, the probability that the server successfully detects this cheating might be small. Our cheating detection aims at minimising the number of times that a participant can cheat during the query evaluation. 

We choose the experimental route in Section \ref{subsec:misbehaved_participant}, where we analyze the probability that the server successfully detects a cheating participant given the number of times the participant cheats using specific cheating strategies and the type of tests used by the server. 

\section{Evaluation Setup}
\label{sec:exp_setup}
We implemented DataRing in C/C++ using the C implementation of the additive homomorphic Elliptic Curve ElGamal cryptosystem~\citep{elgamal} as a base. 
We used ElGamal encryption on prime256v1 elliptic curve with 128-bit security as defined in OpenSSL \citep{openssl_prime256, rfc_prime256_secp256r1}.

We ran our experiments on Amazon EC2 \citep{amazon_ec2} using r5.4xlarge instance with 16 cores and 128GB of memory. To validate our theoretical analysis, we used a real-world dataset containing information regarding 2.26 million loans made on a peer-to-peer lending platform called Lending Club \citep{LendingClub}. 
We chose 10 attributes related to a borrower including loan amount, term, interest rate, etc. 
For categorical attributes, we encode their values using integer encoding.  
We observe that in the dataset of 10 chosen attributes and 2.26 million records, 97\% of records are unique. 
Thus, there are 2.2 million unique records from this dataset. 
We extract each participant's dataset $D$ from the 2.2 million unique records. 
For each participant's dataset $D$ thus extracted, we assume that the server $\mathcal{S}$ knows a random subset of $L$ records as the background knowledge.

It is noted that this work considers static datasets and leaves evolving datasets for future work.  

\begin{table}[ht!]
\centering
\caption{Default parameter values used for the evaluation} \label{table:table:default_parameter}
\begin{tabular}{ l c c } 
\hline
Parameter &  Value \\
\hline
Participant's dataset size $(N)$ &  500,000\\
Domain cap $(a)$ &   4\\
Domain size $(|\mathcal{D}|)$ &   2,000,000\\
Partial view size (V)  &  5,000\\
Servers' background knowledge size $(L)$ &  500\\
Privacy budget of a party for another party $(\epsilon)$ & 0.5\\
Privacy budget of a party for $\mathcal{S}$ $(\epsilon_{\mathcal{S}})$ & 0.5\\
\hline
\end{tabular}
\end{table}

\descr{Reducing the Domain Size.} Recall a dataset $D$ is represented as a histogram over a public domain $\mathcal{D}$. If the domain size is very large, which is likely to be the case with high dimensional datasets, it imparts an exponential penalty on time and communication complexity. For efficiency, we impose a cap on the size of the domain as a multiple $a > 1$ of the size of the dataset. We call the integer $a$, the \emph{domain cap}. Thus for our experimental evaluation, for a dataset of size $N$, the domain size, i.e., the number of possible records are assumed to be $aN$, a subset of all domain points from $\mathcal{D}$. This includes the $N$ records in the dataset. Identifying which points to include in the capped domain can be done algorithmically by sampling points which are close to actual points. The rest of the domain points are discarded. 

While this is not ideal from a privacy point of view (an adversary, e.g., the server, can know which domain points are \emph{not} part of the dataset), it is not a blatant compromise of privacy. First, in many real-world datasets, a large portion of domain points, so-called \emph{structural zeroes}, are never realized in practice (e.g., an interest rate disproportionate to the loan amount lent). Thus, these may well be discarded from the domain. Second, even with the restricted domain size of $aN$, there are $\binom{aN}{N}$ possible datasets, the dataset $D$ being one of them. Assuming $N$ to be large, the resulting set of candidate datasets is too large for the dataset $D$ to be identified. With these considerations in mind, we use the domain cap as a trade-off between privacy and efficiency.

The default parameter values for the evaluation of DataRing are summarized in Table \ref{table:table:default_parameter}. We consider the ratio between dataset size and partial view size $\rho=0.01$ for given $\eta = 0.05$. Thus, the minimum size of server's background knowledge is $L_{min}=299$. We use the server's background knowledge size $L=500$ in all evaluations which ensures the condition for a partial view passing the verification $r_0 > 1$.

\section{Security Evaluation}
\label{sec:evaluation}
In this section, we evaluate the robustness of DataRing in defending against cheating participants. 

\subsection{Effectiveness of Partial View Collection \& Verification}\label{subsec:robustness_pv}

According to the partial view collection, the participant is supposed to send the set $\mathsf{lbs} = \{ (\sigma(i),f_i)\}_{i \in \mathcal{D}}$ to $\mathcal{S}_1$ and the corresponding inverse permutation $\sigma^{-1}$ to $\mathcal{S}_2$ for sampling a partial view of $V$ records (each record is represented by a label $i$). 
According to Theorem \ref{the:fakeD}, we can simply concentrate on an adversary that chooses a dataset at the start (possibly fake), a random permutation $\sigma$ and its correct inverse $\sigma^{-1}$. What can be done with any other strategy, e.g., by submitting an inverse permutation different from $\sigma^{-1}$, can also be done with this strategy with the same advantage. 

Let $P_c$ denote a cheating participant with true dataset $D$, who creates a fake dataset $D'$. The fake dataset $D'$ is parameterised by $n$, the number of true records from $D$ kept in $D'$, where $0 \le n \le N$. The two extremes are when $n = 0$, a completely fake dataset, and when $n = N$, the true dataset. Participant $P_c$ can create a fake dataset by choosing $n$ true records, and $N - n$ records from the rest of the domain, i.e., which are not part of $D$. 
The aim of $P_c$ is to choose $D'$ in such a way that the partial view sampled from $D'$ includes as little true data as possible while still passing the verification. Thus, $P_c$ try to choose an $n$ as small as possible. In the below, we show how $P_c$ chooses such an $n$.

\descr{Finding the minimum value of $n$.} 
Among $V$ records sampled from $D'$ for the partial view, let $v$ denote the number of records in the partial view sampled from $n$ true records. Thus, if any records in $\mathcal{L}$ are found in the partial view, they must come from these $v$ records.  
Let discrete random variable $R_v$ denote the number of records in $\mathcal{L}$ found in $P_c$'s partial view. $R_v$ follows a hypergeometric distribution with parameters $N, v, L$. 

$P_c$ is interested in minimizing $n$. However, it must also ensure that at least $r_0$ records from $\mathcal{L}$ are present in its partial view with a target probability $\theta$. In other words, $P_c$ must ensure that $ \mathrm{Pr}(R_v \geq r_0) \geq \theta$. To ensure this, $P_c$ first determines $v_{{\min}}$ to satisfy these constraints, leading to the following definition:
\begin{equation}\label{eq:v_x_min}
    v_{{\min}} = \min \left\{v\in [0, V] : \sum_{k=r_0}^L \frac{\binom{v}{k}\binom{N-v}{L-k}}{\binom{N}{L}} \geq \theta \right\}
\end{equation}

Given parameters $N, V, L$ and its target $\theta$, based on Eq. (\ref{eq:v_x_min}) $P_c$ picks a value of $r_0$ from $[1, L]$ to determine $v_{{\min}}$. Note that for a given target $\theta$, $P_c$ could obtain a different $v_{{\min}}$ when it picks a different $r_0$. 
As $P_c$ is not aware of the exact threshold $r_0$ used by the servers, to meet its target $\theta$, the safer strategy for $P_c$ is to choose the maximum value of the obtained $v_{{\min}}$. We denote this value $v_{\text{opt}}$. 

Next, $P_c$ determines the minimum value of $n$ so that at least $v_{\text{opt}}$ records are sampled from these $n$ true records with probability $\theta$. This can be determined as: 

\begin{equation}\label{eq:n_x_min}
    n_{{\min}} = \min \left\{n \in [0, N] : \sum_{k=v_{\text{opt}}}^V \frac{\binom{n}{k}\binom{N-n}{V-k}}{\binom{N}{V}} \geq \theta \right\}
\end{equation}

Table \ref{table:n_opt_participant_pass_rate} shows the minimum number of true records of $D$ maintained in $D'$ ($n_{{\min}}$) against the participant $P_c$'s target probabilities $\theta$, given the tolerated false positive rate $\eta = 0.05$ and other parameters as in Table \ref{table:table:default_parameter}.
We can see that in order to pass the verification with high probability ($\theta \geq 0.9$), the participant must use a dataset $D'$ that contains substantial proportion of the true dataset $D$. For instance, with $\theta = 0.95$, the participant must maintain at least 485,786 (i.e., 97.15\%) true records of $D$ in $D'$. In other words, checking the validity of the collected partial view against the background knowledge of the server could effectively enforce the participant to submit a sample of its true dataset.

\begin{table}[t]
\centering
\captionof{table} {Number of true records $n_{\min}$ maintained against actual target probabilities ($\theta$)}

 \begin{tabular}{r|| c c c c} 
 
$\theta$ &0.91  &0.93  &0.95 &$\ge 0.96$ \\
 \hline
$n_{{\min}}$&410,730 &443,155  &485,786  &500,000 \\
\end{tabular}
\label{table:n_opt_participant_pass_rate}
\end{table}

\subsection{Robustness of Cheating Detection}
\label{subsec:misbehaved_participant}
Recall from Section~\ref{subsec:overview_query_phase} that a participant is given $m = m_{\mathsf{q}} + m_{\mathsf{t}}$ queries, $m_{\mathsf{q}}$ of which are real queries (from another participant) and $m_{\mathsf{t}}$ are test queries from the server. The server verifies answers to all test queries to detect a cheating participant after $m$ queries are evaluated. 

Let us consider ``cheating'' as the \textit{positive} class and ``honest'' as the \textit{negative} class in our cheating detection scheme. We define a true positive (TP) is an outcome where the server $\mathcal{S}$ correctly detects a cheating participant. A false positive (FP) is an outcome where $\mathcal{S}$ incorrectly detects an honest participant as cheating. A true negative (TN) is an outcome where $\mathcal{S}$ correctly detect a honest participant. A false negative (FN) is an outcome where $\mathcal{S}$ incorrectly detects a cheating participant as honest one. 

Our goal is to detect cheaters while minimising any adverse impact on honest participants, i.e., $\text{FP} = 0$. In order to do this, we set $\delta = 0.95$ when generating $ \text{noise}_{\max}$, i.e., 95\% of the noise values drawn from the $\text{Lap}(m_q/\epsilon)$ lie within $[ \pm\text{noise}_{\max}]$ (cf Sec. \ref{subsec:test_functions}). With this value of $\delta$, we achieved a 100\% TN (i.e., a 0\% FP). Since the honest participants are not affected, our accuracy metric is thus a direct measure of detecting cheaters:    
\[
\text{Accuracy} = \frac{\text{TPs}}{\text{TPs}+\text{FNs}} 
\]
To compute accuracy, we fix the number of incorrect answers from a cheating participant, and run the experiment 30 times. If cheating is detected, we increase TP by 1. Accuracy is thus the average over all 30 runs.

\subsubsection{Cheating Strategies}

The cheating participant's goal is to give incorrect answers to the $m_q$ real queries. We define an \textit{incorrect answer} as the answer to a query when the participant evaluates the query on a fake dataset. However, since these queries cannot be distinguished from the $m_t$ test queries, the participant cannot submit incorrect answers to all queries without being detected. Thus, a safer strategy for the cheating participant is to generate a fake dataset $D^*$ (which can be completely different from $D$ or a modified version of $D$). Once receiving a query, it randomly chooses to use the true dataset $D$ or the generated fake dataset $D^*$ to compute the query's answer. Hence, the number of incorrect answers varies from $[0:m]$.

We below envision two best strategies that a cheating participant could use to generate $D^*$. 
\begin{itemize}
    \item \textbf{Modifying original data records}: The goal of this strategy is to modify a fraction of the true dataset and hence give incorrect answers. This can be done by replacing a number of original data records in $D$ by new arbitrary records. Specifically, let $X$ denote the number of records in  $D$ to be replaced by arbitrary records, ($0 \leq X \leq N$). Hence, in the modified dataset $D^*$, the total number of data records remain at $N$, i.e., $\sum_{l} \mathsf{val}(l)=N$. We call the ratio of $X/N$ the \textit{modifying rate $\alpha$}, where $0 \leq \alpha \leq 1$.
    \item \textbf{Adding data records}: The goal of this strategy is to add arbitrary data records to the original dataset $D$ to obtain fake dataset $D^*$ and hence possibly scale up the query's answer. Let $\omega$ denote the \textit{adding rate}, ($\omega > 0$). The number of added data records is $\omega N$. Thus, the total number of data records in $D^*$ is $(\omega + 1)N$, i.e., $\sum_{l} \mathsf{val}(l)= (\omega+1) N$. 
\end{itemize}

\begin{figure}[ht]
\centering
\includegraphics[width=0.7\linewidth]{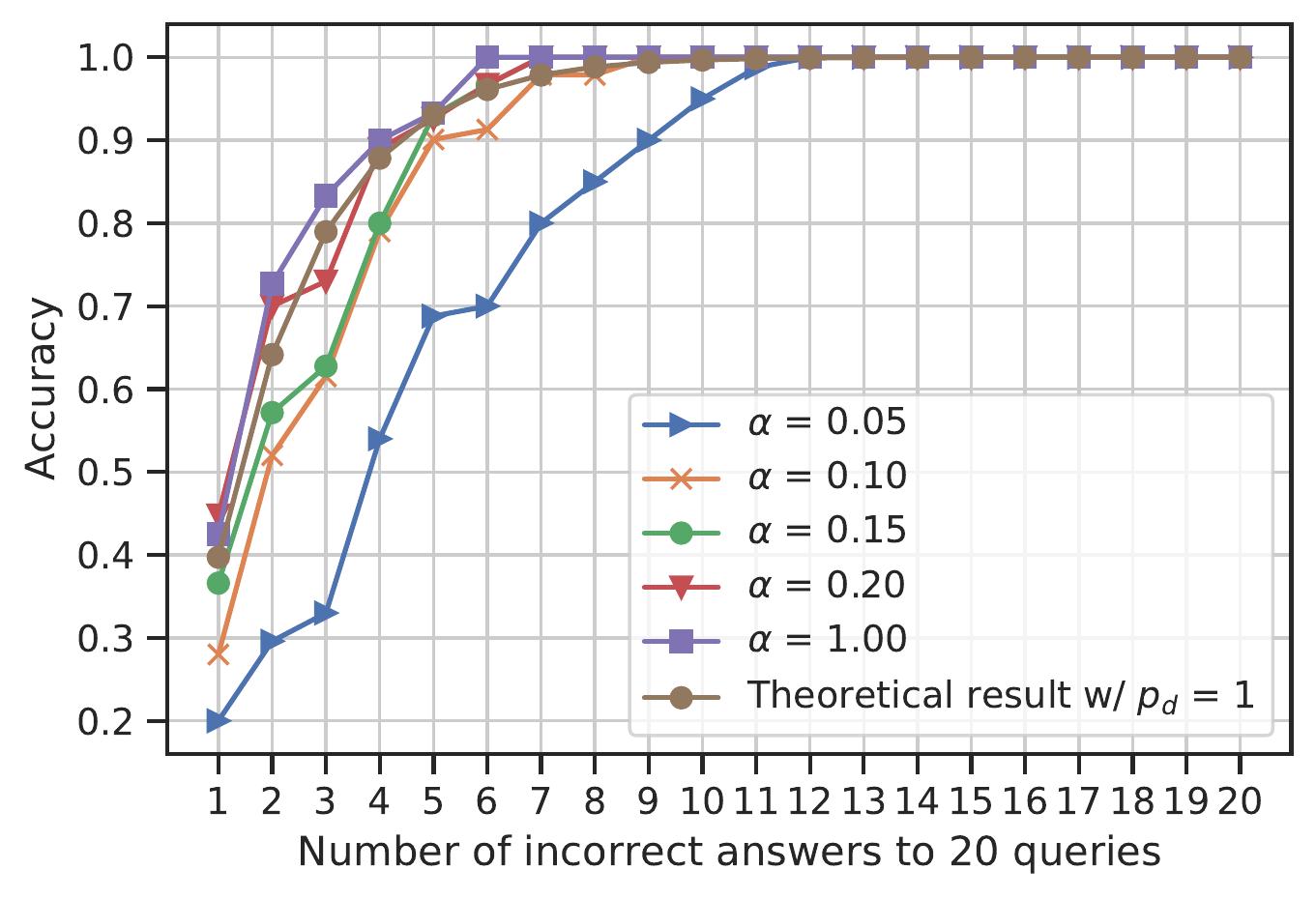}
\caption{Accuracy of the cheating detection against modifying data records strategy
\label{fig:test_anal_experiment_theoretial_replace}} 
\end{figure}

\subsubsection{Experimental Results} \label{subsubsec:experiment}

We consider the data sharing between two participants $P_1$ and $P_2$. 
As the server monitors the behavior of each participant separately, we only show the experimental results regarding $P_1$.

We assume that the server $\mathcal{S}$ receives $m_q = 10$ encrypted real queries from $P_2$ (to be evaluated on $P_1$'s dataset). It also generates and encrypts $m_t = m_q$ test queries to conduct cheating detection against $P_1$.  
Thus, there are $m=20$ queries to be evaluated on $P_1$'s dataset. Other parameters are set as in Table \ref{table:table:default_parameter}. 

\descr{Accuracy of the cheating detection.} 
In the following we shall show the experimental results when $P_1$ uses two main cheating strategies in the query evaluation in Fig. \ref{fig:test_anal_experiment_theoretial_replace} and Fig. \ref{fig:test_anal_experiment_theoretial_scaleup} respectively. Note that the total privacy budget is $\epsilon = \epsilon_{\mathcal{S}} = 0.5$.

We first show the experimental results when $P_1$ modifies original data records in $D$ to obtain $D^*$. It may be possible that the participant hides (removes) a small number of data records (potentially more valuable than others) by replacing them with arbitrary ones, i.e., $\alpha \approx 0$. However, our goal is to ensure that statistical count queries are answered close to their true answers, which will not be impacted by hiding a very small number of records. Also, note that the modifying rate $\alpha$ is similar to the poisoning rate in machine learning's poisoning attacks which has been considered to be less than 20\% in prior work \citep{poisoning_attack}. In an extreme case, $P_1$ evaluates queries on a completely fake dataset $D^*$, i.e., $\alpha = 1$. Thus, we evaluate the accuracy of our cheating detection scheme against  cheating participant that modifies the original dataset with various modifying rates $\alpha = [0.05, 0.1, 0.15, 0.2, 1]$.

Figure \ref{fig:test_anal_experiment_theoretial_replace} shows the accuracy of our cheating detection against the cheating participant that provides $x=[1:20]$ incorrect answers with different modifying rates $\alpha$. First, we can see given a fix number of incorrect answers, the cheating detection achieves higher accuracy when higher $\alpha$ values were used by $P_1$. This is because larger number of records are replaced by arbitrary data making the test query answers more likely to lie outside the acceptance range. Second, the accuracy is closer to the theoretical probability of the server successfully detects cheating with $p_\mathsf{d} = 1$ (computed based on Eq.~(\ref{eq:prob_detect_lie})). Third, with a small modifying rate, e.g., $\alpha = 0.05$, our cheating detection achieves an accuracy of lower than 50\% when less than 3 incorrect answers were provided. Nevertheless, when a small number records was modified, it has less impact on the statistical count queries. Next, for all investigated $\alpha$ values, the accuracy increases when more incorrect answers were provided and it reaches 100\% when the participant provides more than 11 incorrect answers since at least one incorrect answer is given to a test query and being caught. Finally, when $P_1$ uses $\alpha = 0.2$ and provides only one incorrect answer, our cheating detection can still detect it as cheating with an accuracy of 45\%. Thus, cheating participants cannot always escape detection even they deviate from the protocol only once. 

We now show the experimental results when $P_1$ adds arbitrary data records to the true dataset $D$ with adding rate $\omega =[0.5, 1.0]$ in Fig.  \ref{fig:test_anal_experiment_theoretial_scaleup}. Specifically, to obtain $D^*$, $P_1$ adds $\omega N$ extra data records to $D$. 
We can see that the accuracy is close to the theoretical results with $p_\mathsf{d} =0.5$, which considers the average success rate of a test is 0.5. Given the number of incorrect answers, the cheating detection achieves higher accuracy with higher adding rate $\omega$. Similarly, for a given $\omega$, the server achieves higher cheating detection accuracy when more incorrect answers were provided. It is also noted that although the cheating participant has higher chance to pass the cheating detection by adding data records as compared to modifying data records, it still cannot always get away undetected even it deviates from the protocol only once.

In summary, our results indicate that the cheating detection can catch a participant that provides incorrect answer to only \textit{one} query with a probability of 0.2. In the other words, the participant needs to use its true dataset to answer all queries to ensure that it avoids detection with a probability of more than 0.8.

\captionsetup[figure]{justification=centering}
\begin{figure}[t]
\centering
\includegraphics[width=0.7\linewidth]{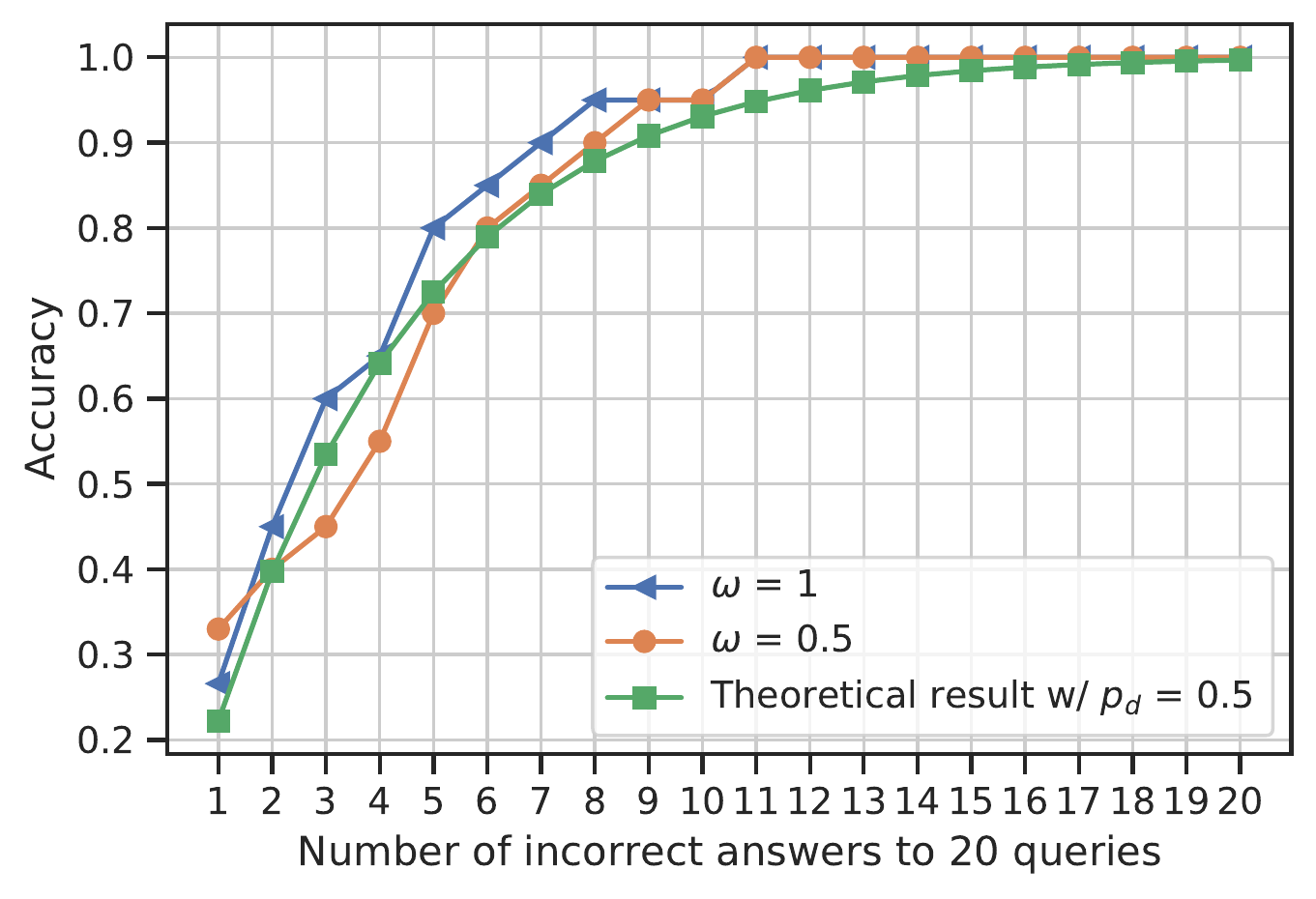}
\caption{Accuracy of the cheating detection against adding data records strategy
\label{fig:test_anal_experiment_theoretial_scaleup}} 
\end{figure}

\descr{Impact of privacy budget.} 
We vary the total privacy budget of a participant $P_1$ for the server $\epsilon_{\mathcal{S}}$ and for the querying participant $P_2$, $\epsilon = \epsilon_{\mathcal{S}} = [0.1:1.0]$. Hence, the noise added to $P_1$'s answer is drawn from the Laplace distribution of scale $m/(\epsilon+\epsilon_{\mathcal{S}})=m/2\epsilon$ where $m$ is the number of all queries. 

For modifying data records strategy, we evaluate the impact of modifying rate $\alpha=0.1$.
Figure \ref{fig:impact_epsilon_modifying_data} indicates that for a given number of incorrect answers our cheating detection accuracy increases with higher privacy budget and achieves similar performance when $\epsilon = [0.5:1.0]$. For smaller privacy budgets $\epsilon = 0.1$ or $0.3$, as higher maximum noise is accepted, there is more chance that an answer from modified dataset after being added with noise falls within the acceptance range. Thus, there might be more FNs that results in lower accuracy.

\captionsetup[figure]{justification=centering}
\begin{figure}[h]
\centering
\includegraphics[width=0.7\linewidth]{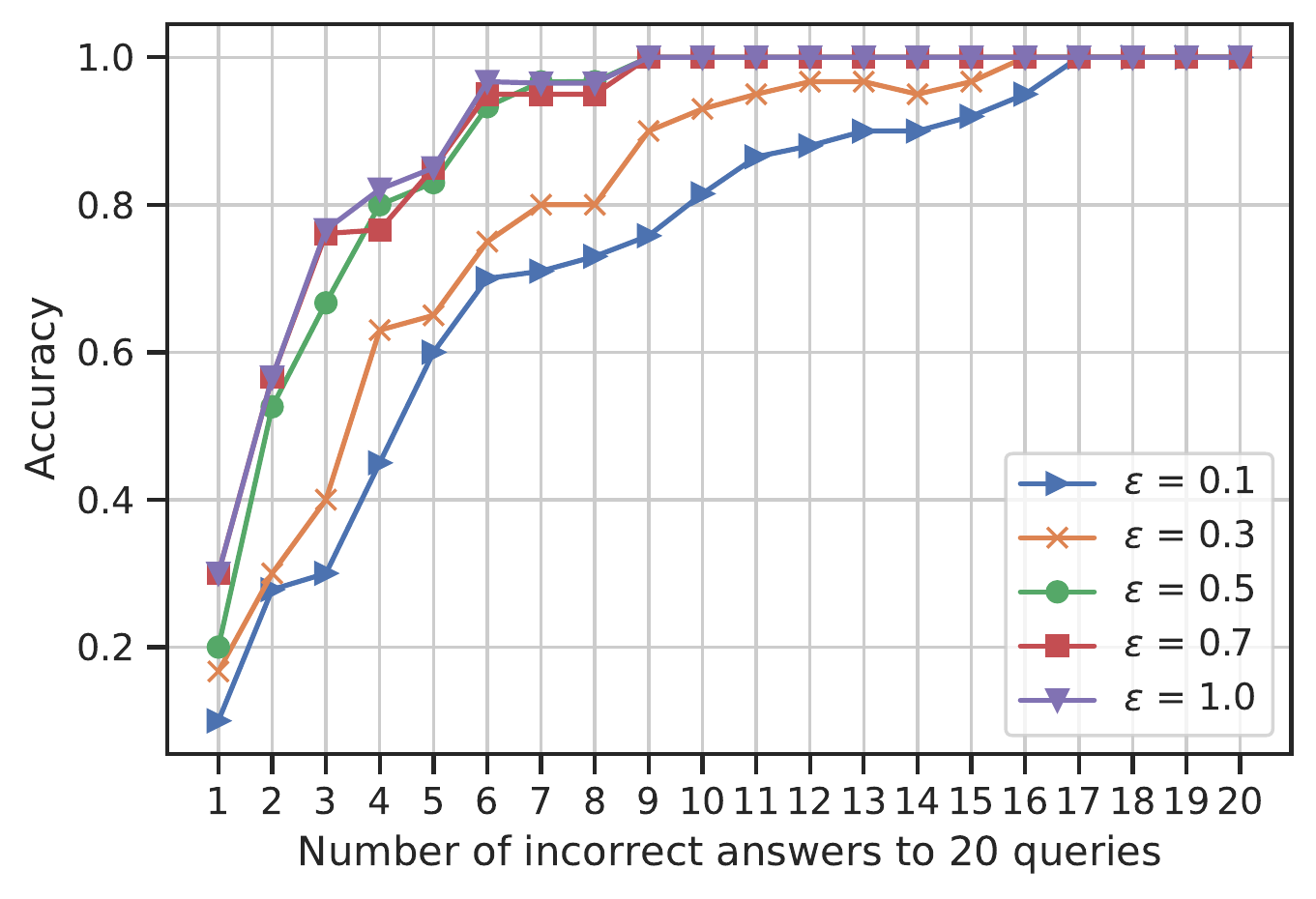}
\caption{Impact of privacy budget on the accuracy of cheating detection against modifying data records strategy ($\alpha = 0.1$) 
\label{fig:impact_epsilon_modifying_data}} 
\end{figure}

\captionsetup[figure]{justification=centering}
\begin{figure}[h]
\centering
\includegraphics[width=0.7\linewidth]{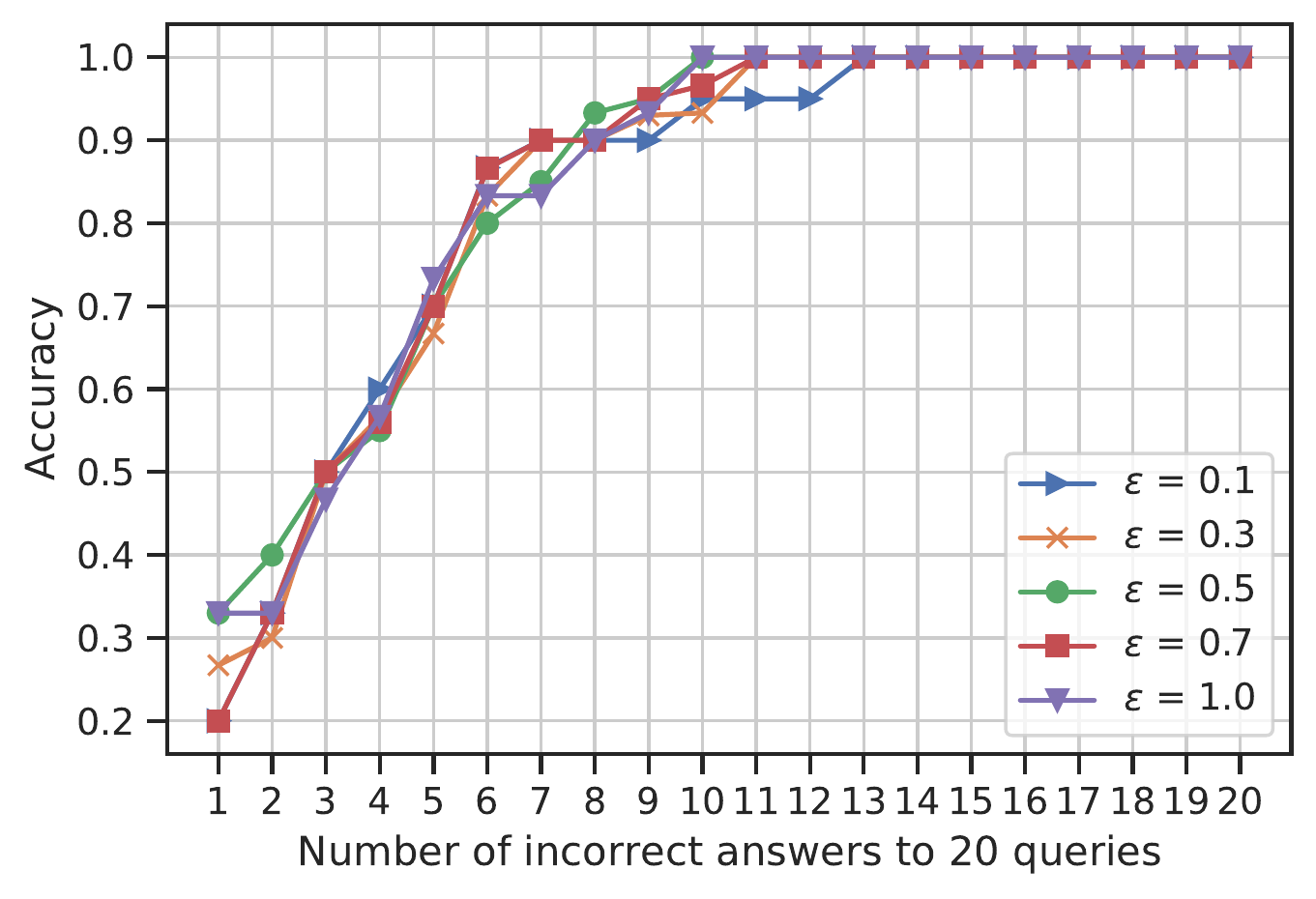}
\caption{Impact of privacy budget on the accuracy of cheating detection against adding data records strategy ($\omega = 0.5$)    \label{fig:impact_epsilon_injecting_data}}
\end{figure}

For adding fake data records strategy, we choose $\omega=0.5$ as the adding rate to generate a fake dataset $D^*$. Figure \ref{fig:impact_epsilon_injecting_data} shows that the cheating detection accuracy is not sensitive to privacy budget. This is because the added data records significantly scale up the test query answers. Adding noise to the answer hence makes it rather far from the acceptance range. Thus, privacy budgets do not  affect the cheating detection accuracy.

In summary, our cheating detection accuracy is only sensitive to small privacy budgets if there is a relatively small change to the true dataset and  is not sensitive to privacy budgets if a large change is made to the true dataset. 

\section{Performance Evaluation}
\label{sec:performance}

In this section, we evaluate the performance of DataRing, in terms of the computation and the communication overhead. We first evaluate DataRing's performance when the data domain cap increases. We then present the security-performance trade-off in the query evaluation phase and finally compare DataRing with the most relevant work of Drynx \citep{Froelicher2019} and Unlynx \citep{Froelicher2017}.

\subsection{Runtime}
We first evaluate the impact of the data domain cap $a$ on the runtime of DataRing.

\descr{Effect of Domain Cap on Runtime.}
Specifically, we vary the domain cap $a \in \{4, 6, 8, 10\}$ while keeping the default value for other parameters as in Table \ref{table:table:default_parameter}. 

Table \ref{fig:scale_up_factor_effect} shows the runtime of key operations in DataRing. The runtime of shuffling data domain, PV sampling and PV verification scales almost linearly with the domain cap $a$. The most expensive operation is the PV sampling as it requires randomly selecting $V$ labels and mapping all labels to their original order due to permutation. 
However, these operations are executed only once in the partial view phase. 
In the query evaluation phase, the runtime for computing real/test query's answer slightly increases with the increase of $a$. For instance, when $a$ is doubled from 4 to 8, the runtime to compute a real/test query's answer is increased by around 1 second. 
The runtime for formulating queries is slightly increased with larger domain cap $a$ as the server can pre-compute encryptions of 0s and 1s offline. The exception is formulating \textit{Test V} query as the server must re-randomise all ciphertexts in the obtained partial view. 
As we can see, even though higher domain caps mostly affects the runtime of operations in the partial view phase, these are one-off operations. In additions, pre-computing encryptions of 0s and 1s and using them to form queries can significantly reduce runtime of operations in query evaluation phase. Thus, it is feasible to use a relatively large domain cap to balance the trade-off between privacy and efficiency.

\begin{table*}[]
    \centering
    \captionof{table} {Effect of domain cap on average runtime of each operation. ($(*)$ is inquiring participant's operation) \label{fig:scale_up_factor_effect}}
    \resizebox{\columnwidth}{!}{
    \begin{tabular}{c|c c c c||c|c c c c}
        Participant's& & & Runtime (s) & & Servers'& & & Runtime (s) & \\
        Operation& a = 4& a = 6& a = 8& a = 10& Operation & a = 4& a = 6& a = 8& a = 10 \\
        \hline
        Shuffling domain& 9.35 &14.54  &19.56 &24.69 & Sampling PV& 25.60 &38.83  &48.92 &61.00  \\
        \hline
    Real Query Form.$^{(*)}$ & 1.63 &2.40  &3.12 &3.86  & Verif. PV  &0.56  &0.85  &1.13  &1.41  \\
        \hline
         Test L Ans.& 2.86 & 3.35 &3.87 &4.36 & Test L Form.& 1.63 & 2.44  &3.27 &3.92  \\
         \hline
         Test V Ans.& 2.91  & 3.52 &4.11 &4.45 & Test V Form.& 10.30 & 15.43 &20.62 &25.67  \\
         \hline
         Test N Ans.& 2.97  &3.57  & 4.17 & 4.51& Test N Form.& 2.44  & 3.66 &4.90 &6.14  \\
         \hline
        Real Query Ans.&2.92  &3.44  &3.98 &4.52  & Test Ans. Verif. &0.16  &0.24  &0.32 &0.40  \\
         
    \end{tabular}
    }
\end{table*}

\descr{Effect of Cheating Detection on Runtime.}
We consider two participants $P_i$ and $P_j$ who are assumed to have proceeded to the query evaluation phase after submitting honest partial views. 
We measure the total runtime for participant $P_i$ to evaluate 10 queries on participant $P_j$'s dataset consisting of 500,000 records with 10 attributes.
To detect cheating, we conduct 10 test queries during the query evaluation. Thus, $P_j$ must respond to 20 queries. Assuming that $P_j$ is a cheating party which uses a modified (fake) dataset that contains 15\% of dataset as fake data to evaluate 10 queries at random while using its true dataset to evaluate other 10 queries. To calculate the total runtime in the query evaluation phase, we omit the runtime for participant $P_i$ to form its real queries and the runtime for the server to generate test queries as these can be done off-line. Experimental result shows that by adding test queries to the process, the server successfully detects $P_j$ as cheating while introducing additional overhead to total runtime. Specifically, to evaluate 10 queries DataRing takes 90.63 seconds. Although without any test queries the total runtime could be reduced by 27\%, this does not ensure the correctness of participant's answers. 

\subsection{Communication Overhead}\label{subsec:com_overhead}
Our EC-ElGamal encryption scheme relies on prime256v1 curve with 128-bit security. Each ciphertext has a size of 66 bytes. 

In the partial view phase, $P$ sends set 
$\mathsf{lbs}$ to server $S_1$ and an inverse permutation vector $\sigma^{-1}$ to server $S_2$ that incurs an overhead of $aN \times 8$ bytes and $aN \times 4$ bytes overhead, respectively. $S_1$ then sends encrypted vector $\llbracket \mathsf{lbs} \rrbracket$ to $S_2$, where each element of $\llbracket \mathsf{lbs} \rrbracket$ includes an enumerated label and a ciphertext, that results in an overhead of $aN \times 70$ bytes.  
In the query evaluation phase, to evaluate a single query on $P$'s dataset, one server in $\mathcal{S}$ sends the encrypted query with a communication overhead of $aN \times 66$ bytes to $P$. Participant $P$ responds to $\mathcal{S}$ with an encrypted answer that is only one ciphertext of size $66$ bytes. 

Given parameters as in Table~\ref{table:table:default_parameter}, the overhead in the partial view phase includes 16MB and 8MB for $P$ to send set 
$\mathsf{lbs}$ and vector $\sigma^{-1}$ to $S_1$ and $S_2$, respectively, and 140MB for interaction between the two servers. 
Assuming that there are 10 real queries to be evaluated on $P$'s dataset, $\mathcal{S}$ will send these 10 real queries and 10 test queries to $P$ (in encrypted form). Thus, the overhead to execute 10 queries is $20 \times 132$MB $= 2640$MB at the server side and $20 \times 66$ bytes $ = 1.32$kB at the participant side. 

\subsection{Comparison with Existing Work}

We now compare DataRing's performance with Drynx \citep{Froelicher2019} and Unlynx \citep{Froelicher2017}, the two most relevant systems to our work. We adapt our evaluation to match some of their relevant settings for comparison. 

\begin{table}[b]
    \centering
    \captionof{table} {Comparison of DataRing's and Drynx's performance \label{fig:compare_drynx}}
    \begin{tabular}{c| c| c}
         & Drynx & DataRing\\
        \hline
        Runtime to  eval. & 2.6 s& 1.0 s\\
        a mean query & &\\
        \hline
        Comm. overhead  & 1.28kB (answers) & 1.32kB (answers)\\ 
        eval. 10 queries & 2.43MB (proofs) &52.8MB (tests)\\
    \end{tabular}
\end{table}

\descr{DataRing vs. Drynx.} Table \ref{fig:compare_drynx} shows a comparison between DataRing and Drynx performance in runtime and communication overhead. According to \citep[\S IX.B]{Froelicher2019}, Drynx takes 2.6 seconds to evaluate a mean query. 
In DataRing, we evaluate runtime of a query on the average income of borrowers in the Lending Club dataset of 20K records. To evaluate this mean query, a sum query and a count query is performed and the results are sent to the querying participant. Additional test queries are also conducted to ensure the correctness of the answer. 
The experimental result shows that DataRing takes only 1 second to evaluate a mean query.   
For communication overhead, Drynx reports in \citep[\S IX.B]{Froelicher2019} that the total bandwidth overhead for 10 participants communicating with the servers including 1.28kB for sending answers and 2.43MB for sending and storing input-range proofs. We consider these reports as being equivalent to the overhead incurs when a participant evaluates 10 queries on its dataset and communicates the answers to servers. 
In DataRing, to evaluate 10 queries, the communication overhead includes $1.32$kB for the participant sending its answers of real queries to the server and 52.8MB for the participant and the server communicating 10 test queries and answers during cheating detection. 
Although Drynx poses less communication overhead than DataRing, it is noted that Drynx does not check the correctness of data inputs to the queries. Drynx can only guarantee that the data inputs are within  acceptable semantic bounds, e.g., age between 0 and 120. 

\descr{DataRing vs. Unlynx.} We also compare DataRing's performance with Unlynx's private survey application \citep[\S 8]{Froelicher2017} in which it considers honest-but-curious and non-colluding servers. 
We evaluate the same query form that was used in Unlynx. Specifically, we measure the total runtime for evaluating the average income of borrowers (who own a house and have grade in $[0:6]$, and be grouped by loan term (36 months and 60 months) from 20 participants each has a dataset of 20K borrowers. These datasets are extracted from the Lending Club dataset. We simply conduct a count and a sum query to each participant with regard to each loan term.
The experimental results are showed in Table \ref{fig:compare_unlynx}.
DataRing takes slightly more time than Unlynx to evaluate this query as it ensures the correctness of answers by conducting cheating detection. Unlynx does not guarantee the correctness of query answer. 
As Unlynx does not report its communication overhead in this evaluation, we omit this comparison. 

\begin{table}[ht]
    \centering
    \captionof{table} {Comparison of DataRing's and Unlynx's performance \label{fig:compare_unlynx}}
    \begin{tabular}{c| c| c}
         Runtime & Unlynx & DataRing\\
        \hline
        Eval. mean query from 20 parties & 151 s& 161 s\\
    \end{tabular}
\end{table}

\section{Related work}
\label{sec:related_work}
Most privacy-preserving data sharing systems assume honest or honest-but-curious data owners.
Several works leverage homomorphic encryption and zero-knowledge proofs to ensure participants' data security and computational correctness \citep{Kim2018,Froelicher2017,SMCQL_2017,Pilatus_ECEG}. For instance, in Unlynx \citep{Froelicher2017}, data owners supply their query's response in encrypted form to a group of servers that aggregate the query's end result and provide proofs of the correctness of their computation. \cite{SMCQL_2017} proposed SMCQL which enables queries on datasets hosted by multiple data owners without revealing sensitive information of individuals in each dataset.  SMCQL assumes that data owners are trusted to faithfully execute the protocol. 
 
\cite{froelicher2020scalable} proposed a system for privacy-preserving distributed learning that protects the data and model confidentiality using homomorphic encryption. 
Generally, the main focus of these works is to protect data confidentiality and privacy, with less concern about the correctness of data being shared.

A few recent works in privacy-preserving data sharing have circumnavigated this hurdle by assuming that some background or public information about a party's dataset is known in advance, which limits the extent to which parties can deviate from the analysis on their true inputs. The Drynx system~\citep{Froelicher2019}  checks if the input attributes of a dataset are within a certain range (in a privacy-preserving manner), where the range is determined by the size of the parties dataset and semantic knowledge of these attributes, e.g., age can only be within the range 0 and 150. Arguably, this only discards invalid inputs, and participants may still deviate from their true inputs by picking arbitrary values within these ranges. Likewise, Prio~\citep{corrigan2017prio}, a system for privately computing aggregate statistics, employs a public predicate to ensure the inputs are within publicly allowed ranges. Apart from this, Prio also assumes the much simpler setting where each participant has a single datum instead of datasets with a large number of records. Helen, a system proposed by~\cite{helen2019} ensures the correctness of the collaboratively machine learning model by directing the participants to broadcast an encrypted summary of their datasets in advance. These summaries are used as the ground truth to verify the correctness of the learned model. However, the correctness of the summaries themselves is not verified. In contrast, DataRing ensures the correctness of data inputs and query results by requesting participants to provide a partial view (PV) of their datasets, verifying the trustworthiness of the PV, and conducting hidden tests to ensure that the participants continue to use datasets that are consistent with their partial views while answering queries.

\section{Conclusion}
\label{sec:conclusion}
We have proposed DataRing, a system that enables privacy-preserving data sharing among mutually mistrusting data owners while ensuring the correctness of query answers. DataRing ensures confidentiality of participant's dataset and privacy of individuals in the dataset by using cryptography primitives and differential privacy. It guarantees correctness of data inputs and query answers through the collection of partial view of participant's dataset and cheating detection during the query evaluation. 

\ifjpc
\else
\section*{Acknowledgments}
This work was conducted with funding received under the Pilot Grants scheme from New South Wales (NSW) Cyber Security Network for the Data Ring project.
\fi



\bibliography{references}
\bibliographystyle{abbrvnat}

\appendix

\section{EC-ElGamal Cryptosystem}
\label{app:ec-elgamal}
The following is a well known result for cyclic groups. 
\begin{thm}
\label{the:lag}
Let $\mathcal{E}$ be an elliptic curve over a finite field $\mathbb{F}_p$ and let $B$, a point on $\mathcal{E}(\mathbb{F}_p)$, be a generator of the cyclic subgroup of prime order $q$. Then for every $j \in [1, q-1]$, $jB$ is also a generator of the same group. 
\end{thm}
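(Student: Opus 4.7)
The statement is a standard consequence of Lagrange's theorem applied to groups of prime order, so the strategy is to reduce it to that observation. Let $G = \langle B \rangle$ denote the cyclic subgroup of $\mathcal{E}(\mathbb{F}_p)$ generated by $B$; by hypothesis $|G| = q$ with $q$ prime. Fix an arbitrary $j \in [1, q-1]$ and write $P = jB \in G$. The goal is to show that the cyclic subgroup $\langle P \rangle$ generated by $P$ is all of $G$.

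First I would check that $P \neq O$, where $O$ is the point at infinity (the group identity). If $jB = O$ held, then the order of $B$ would divide $j$; since the order of $B$ equals $q$, this would force $q \mid j$, which is impossible for $1 \le j \le q-1$. Hence $P \neq O$, and in particular the order of $P$ is at least $2$.

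Next I would invoke Lagrange's theorem: the order of $P$ must divide $|G| = q$. Because $q$ is prime, the only divisors are $1$ and $q$. The order cannot be $1$ (that would force $P = O$, contradicting the previous step), so the order of $P$ must be exactly $q$. Consequently $\langle P \rangle$ is a subgroup of $G$ of cardinality $q = |G|$, so $\langle P \rangle = G$, which is precisely the statement that $jB$ generates the same cyclic subgroup as $B$.

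There is no real obstacle here; this is essentially a one-line application of the prime-order case of Lagrange's theorem, and the only care needed is the elementary non-triviality check that $jB \neq O$ for $j$ in the stated range. I would present the argument in the order above: identify the subgroup, rule out the identity, apply Lagrange, conclude.
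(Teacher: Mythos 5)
Your proposal is correct and matches the paper's approach: the paper simply notes the result is a corollary of Lagrange's theorem, and your argument (ruling out $jB = O$ for $j \in [1,q-1]$, then using primality of $q$ to force the order of $jB$ to be $q$) is exactly the standard expansion of that one-line citation.
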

\begin{proof}
This is a corollary of Lagrange's theorem~\citep[\S 7]{con-aa}.
\end{proof}

\begin{thm}
Let $q$ be a prime, and let $r \in \mathbb{Z}_q^* = \{1, 2, \ldots, q-1\}$. Then the function $f_r: \mathbb{Z}_q^* \rightarrow \mathbb{Z}_q^*$, defined as $f_r(\alpha) = \alpha r \; (\bmod \; q)$ is a one-to-one map.
\end{thm}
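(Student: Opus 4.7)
The plan is to exploit the fact that $q$ is prime, so that $\mathbb{Z}_q^* $ forms a multiplicative group and every nonzero residue has a multiplicative inverse modulo $q$. I would proceed in three short steps.

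First, I would verify that $f_r$ is well-defined as a map into $\mathbb{Z}_q^*$, i.e., that $\alpha r \bmod q$ is nonzero whenever $\alpha, r \in \{1,\ldots,q-1\}$. Since $q$ is prime and neither $\alpha$ nor $r$ is a multiple of $q$, Euclid's lemma gives $q \nmid \alpha r$, so $\alpha r \bmod q \in \mathbb{Z}_q^*$. Second, I would establish injectivity directly: assume $f_r(\alpha) = f_r(\beta)$, so $\alpha r \equiv \beta r \pmod{q}$, hence $q \mid (\alpha - \beta) r$. Since $\gcd(r,q) = 1$, we conclude $q \mid \alpha - \beta$, and because $\alpha, \beta \in \{1,\ldots,q-1\}$ this forces $\alpha = \beta$. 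Equivalently, one can multiply both sides by the inverse $r^{-1} \in \mathbb{Z}_q^*$ guaranteed by Bezout's identity.

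Finally, since the statement only asks for the map to be one-to-one (injective), the argument is complete at this point; if bijectivity is desired as a bonus, I would invoke the standard pigeonhole observation that an injection between finite sets of equal cardinality is automatically a bijection, which in turn gives surjectivity for free.

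The main obstacle is essentially nonexistent: this is a textbook cancellation-law argument in $\mathbb{Z}_q^*$, and the only subtlety worth flagging is the well-definedness in the first step, which relies crucially on $q$ being prime (the result fails for composite $q$, e.g., $q = 6$ with $r = 2$, where $2 \cdot 3 \equiv 0 \pmod{6}$).
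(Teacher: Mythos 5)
Your proof is correct and complete: the well-definedness check via Euclid's lemma, the cancellation argument for injectivity, and the finite-set pigeonhole remark are all sound. The paper itself gives no argument for this statement, deferring entirely to a citation of Shoup's Theorem 2.5; your write-up is precisely the standard cancellation-law proof that the cited reference contains, so there is nothing to reconcile.
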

\begin{proof}
See Theorem 2.5 in~\citep[\S 2.3]{shoup-nt}.
\end{proof}

\begin{cor}
\label{cor:uni-alpha}
Let $q$ be a prime, let $r \in \mathbb{Z}_q^*$ and let $\alpha$ be drawn uniformly at random from $\mathbb{Z}_q^*$. Then $\alpha r \; (\bmod \; q)$ is uniformly distributed over $\mathbb{Z}_q^*$.
\end{cor}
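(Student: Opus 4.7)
The plan is to deduce the corollary as an essentially immediate consequence of the preceding theorem, which establishes that $f_r : \mathbb{Z}_q^* \to \mathbb{Z}_q^*$ defined by $f_r(\alpha) = \alpha r \pmod{q}$ is one-to-one. Since $f_r$ maps a finite set into itself and is injective, it is a bijection of $\mathbb{Z}_q^*$. The key fact I will invoke is the general principle that any bijection on a finite set pushes the uniform distribution forward to the uniform distribution, applied to this particular $f_r$.

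Concretely, I would proceed as follows. First, note that $|\mathbb{Z}_q^*| = q - 1$, and since $\alpha$ is drawn uniformly, $\Pr[\alpha = a] = 1/(q-1)$ for every $a \in \mathbb{Z}_q^*$. Next, fix an arbitrary target $\beta \in \mathbb{Z}_q^*$. Because $f_r$ is a bijection (by the preceding theorem together with the fact that an injective self-map of a finite set is surjective), there is exactly one $a \in \mathbb{Z}_q^*$ satisfying $a r \equiv \beta \pmod{q}$, namely $a = \beta r^{-1} \pmod{q}$, where $r^{-1}$ exists since $\gcd(r, q) = 1$. Therefore
\[
\Pr[\alpha r \equiv \beta \pmod{q}] \;=\; \Pr[\alpha = \beta r^{-1}] \;=\; \frac{1}{q-1},
\]
which is independent of $\beta$. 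This is exactly the claim that $\alpha r \pmod q$ is uniform on $\mathbb{Z}_q^*$.

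There is no real obstacle here; the corollary is a standard ``uniform-preserved-by-bijection'' argument once the prior theorem is in hand. The only subtle point to make explicit is that injectivity from the preceding theorem upgrades to bijectivity because the domain and codomain are the same finite set, so every $\beta \in \mathbb{Z}_q^*$ has a unique preimage under $f_r$. The rest is a one-line probability calculation.
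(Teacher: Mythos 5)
Your proof is correct and follows exactly the route the paper intends: the corollary is stated without an explicit proof precisely because it is the immediate ``bijection preserves uniformity'' consequence of the preceding injectivity theorem, which is what you spell out. Your added detail (injective self-map of a finite set is a bijection, unique preimage $\beta r^{-1}$, probability $1/(q-1)$ independent of $\beta$) is a faithful and complete filling-in of that one-line argument.
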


The following provides details of basic algorithms in EC-ElGamal cryptosystem used in DataRing.

\descr{Key generation.}
Let $k$ denote the private key which is sampled uniformly at random from $[1, q-1]$. The public key is $K = kB$.

\descr{Encryption.}
Let $X = xB$ be the mapping of a message $x$ as a point on the curve. The encryption of $X$ is the tuple $(C_1, C_2)= (rB, X + rK)$, where $r$ is sampled uniformly at random from $[1, m-1]$, and $K$ is the public-key of the receiver. We denote this by $\mathsf{Enc}_K(x; r)$.

\descr{Decryption.}
Decryption of the ciphertext $\mathsf{Enc}_K(x; \cdot) = (C_1, C_2)$, is defined as $X = C_2 - kC_1$, where $k$ is the private-key of the receiver. The plaintext $x$ is then extracted from $X$. We denote this operation by $\mathsf{Dec}_k$.

\descr{Scalar multiplication of ciphertext.}
Given a scalar $\alpha \in [1, q-1]$, its multiplication with the ciphertext $\mathsf{Enc}_K(x; r) = (C_1, C_2) = (rB, X + rK)$ is defined as: $\mathsf{Enc}_K(\alpha x; \alpha r) = (\alpha C_1, \alpha C_2) = (\alpha rB, \alpha (xB + rK))$. This can be implemented using point addition and doubling operations~\citep[\S 3.3]{guideToECC}.

\descr{Re-encryptions of zero.} A corollary of the scalar multiplication property is that encryptions of $x = 0$, i.e., $\mathsf{Enc}_K(0; r) = (C_1, C_2) = (rB, 0B + rK) = (rB, rK)$, for some $r \in [1, q-1]$ can be re-randomized by choosing a random $\alpha \in [1, q-1]$ and updating the ciphertext as before, resulting in $\mathsf{Enc}_K(0; \alpha r)$. Note that $\alpha r$ modulo $q$ is uniformly distributed over $[1,q-1]$ (Corollary~\ref{cor:uni-alpha}).

\section{Security Proofs}
\label{app:security}

\subsection{Security: Partial View Collection}
\label{app:pvc}
We prove the security of the partial view collection phase using the simulation paradigm~\citep{sim-tut}. Let $P$ be any of the participants in $\mathcal{P}$.
Let $F$ denote the functionality for collecting the partial view in the ideal setting, and let $\Pi$ denote our protocol for collecting the partial view in the real-world setting. Let $z$ be the security parameter, used to generate keys in EC-ElGamal. Let $I_{P}$, $I_{S_1}$ and $I_{S_2}$ denote the inputs of $P$, $S_1$ and $S_2$, respectively. Let $\mathcal{A}$ denote the adversary in the ideal setting, which can control $P$, \emph{and/or} one of $S_1$ and $S_2$ (the latter two in an honest-but-curious way). Let $\mathcal{B}$ denote the adversary that controls $P$ in the real-setting. Further, let $\mathcal{C}$ denote the adversary that controls $S_1$ or $S_2$.

\descr{Ideal Execution.} In the ideal execution of $F$, we assume a trusted third party (TTP). The common inputs of parties $\mathcal{P}$, $S_1$ and $S_2$ are: the domain $\mathcal{D}$ (and its size), $P$'s dataset size $N$, and the size of the partial view $V$. In addition, $P$'s input includes the dataset $D$. The servers $S_1$ and $S_2$ also receive the set $\mathcal{L}$, i.e., the background knowledge of $S$.
All honest parties hand over their inputs to TTP. The corrupted party $P$, controlled by $\mathcal{A}$, may send a dataset $D'$ different from $D$. The TTP upon receiving $D'$ (possibly equal to $D$), first checks if each row of $D'$ is unique (i.e., each point of $D'$ has cardinality 1). If not, it terminates the ideal execution, and sends the message ``invalid dataset'' to $S_1$ and $S_2$. Otherwise, the TTP samples a random sample (partial view) of size $V$ from $D'$. Upon receiving a query on a label $l \in \mathcal{L}$ from $S_1$ and $S_2$, it sends $1$ to both if the corresponding point is in the partial view, and $0$ otherwise. Apart from this, $P$, $S_1$ and $S_2$ receive no further output from TTP. In particular, party $P$, receives no output. Let $\textsc{Ideal}_{F, \mathcal{A}}(I_{P}, I_{S_1}, I_{S_2}, z)$ denote the outputs of the parties and the adversary in the ideal setting.

\descr{Real Execution.} In the real model, the protocol $\Pi$ described in Section~\ref{subsec:pvc}, describes the functionality for collecting the partial view. Let $\textsc{Real}_{\Pi, \mathcal{B}, \mathcal{C}}(I_{P}, I_{S_1}, I_{S_2}, z)$ denote the outputs of the parties and the adversaries in the real setting. 

\begin{defi}
\label{def:real-ideal}
Let $F$ be the functionality for collecting the partial view, and let $\Pi$ be a protocol that computes $F$. We say that $\Pi$ securely computes $F$, if for every pair of probabilistic polynomial time adversaries $(\mathcal{B}(z), \mathcal{C}(z))$ in the real model, there exists a probabilistic polynomial time adversary $\mathcal{A}(z)$ in the ideal model, such that 
\[
\textsc{Ideal}_{F, \mathcal{A}}(I_{P}, I_{S_1}, I_{S_2}, z) \approx_c \textsc{Real}_{\Pi, \mathcal{B}, \mathcal{C}}(I_{P}, I_{S_1}, I_{S_2}, z),
\]
where $\approx_c$ denotes computational indistinguishability, and we assume the adversaries to have auxiliary inputs. 
\end{defi}

Our main result is as follows.

\begin{thm}
If EC-ELGamal is semantically secure under the decisional Diffie Hellman (DDH) assumption, our protocol in Section~\ref{subsec:pvc} securely collects the partial view.
\end{thm}
\begin{proof}
We first separately consider the adversaries corrupting each party.
 
\descr{Corrupted Party $P$:} In the ideal model, the corrupted party sends the dataset $D'$, possibly different from $D$, to TTP. In the real model, the corrupted party can choose $D'$, and then send $\mathsf{lbs} = \{(\sigma(i), f_i)\}_{i \in \mathcal{D}}$ to $S_1$ and $\sigma^{-1}$ to $S_2$. Due to Theorem~\ref{the:fakeD}, we assume that $\sigma^{-1}$ is indeed the correct inverse permutation (and hence the adversary $\mathcal{B}$ does not corrupt this output). The (simulating) adversary $\mathcal{A}$, computes $\{\sigma^{-1}(\sigma(i)), f_i)\}_{i \in \mathcal{D}}$. It then constructs the dataset $D'$ in which the label $\sigma^{-1}(\sigma(i))$ is set to $f_i$, and sends $D'$ to TTP.

\descr{Corrupted Party $S_1$:} For simplicity, we assume there is only one server in $S_1$. The case of more than one servers is analogous, although requires a little more detail. In the ideal model, the input from this server (apart from the common inputs) is the set $\mathcal{L}$ of background information. The simulating adversary $\mathcal{A}$, first uses the security parameter $z$ to construct an EC-ElGamal private-public key pair $(k_1, K_1)$. It gives $(k_1, K_1)$ to $S_1$. The adversary $\mathcal{A}$ creates another private-public key pair $(k_2, K_2)$ using the security parameter $z$, and sends $K_S = K_1 + K_2$ to $S_1$, as the collective public key. Adversary
$\mathcal{A}$ constructs $\mathsf{lbs} = \{(i, f_i)\}_{i \in \mathcal{D}}$, where a random $N$ of the $f_i$'s
are 1, and the rest are $0$, and sends it to $S_1$. Upon receiving $\llbracket \mathsf{lbs} \rrbracket$ from $S_1$, it re-randomizes each pair $(i, \llbracket f \rrbracket)$ with  $(i, \llbracket f \rrbracket + \llbracket 0 \rrbracket)$, where $\llbracket 0 \rrbracket$ represents fresh encryptions of $0$ under $K_S$. It sets this as PV. For each $l \in \mathcal{L}$, the adversary $\mathcal{A}$ queries TTP, and receives bit $b_l$. It then replaces the corresponding entry in PV with $(l, \llbracket b_l \rrbracket)$. It then sends this PV to $S_1$. 
Upon receiving $l \in \mathcal{L}$ from $S_1$ (indicating the start of a threshold decryption operation from $S_1$), it does as follows. If $S_1$ is the last party in the threshold decryption, then it simply sends $S_1$ the partial decryption of the ciphertext under $l$ in PV using $k_2$. Otherwise, it receives the partial decryption of the ciphertext under $l$ in PV from $S_1$, discards it, and sends $b_l$ to $S_1$. 

\descr{Corrupted Party $S_2$:} Again, for simplicity, we assume there is only one server in $S_2$. Also, in the ideal model, the input from this server (apart from the common inputs) is the set $\mathcal{L}$ of background information. The simulating adversary $\mathcal{A}$, first uses the security parameter $z$ to construct an EC-ElGamal private-public key pair $(k_2, K_2)$. It gives $(k_2, K_2)$ to $S_2$. The adversary $\mathcal{A}$ creates another private-public key pair $(k_1, K_1)$ using the security parameter $z$, and sends $K_S = K_1 + K_2$ to $S_2$, as the collective public key. Adversary $\mathcal{A}$ creates a random permutation $\sigma$ and its inverse permutation $\sigma^{-1}$, and gives it to $S_2$. Adversary $\mathcal{A}$ constructs $\llbracket \mathsf{lbs} \rrbracket = \{(i, \llbracket 0 \rrbracket)\}_{i \in \mathcal{D}}$. For each $l \in \mathcal{L}$, the adversary $\mathcal{A}$ queries TTP, and receives bit $b_l$. It then replaces the corresponding entry in $\llbracket \mathsf{lbs} \rrbracket$ with $(l, \llbracket b_l \rrbracket)$. It again updates $\llbracket \mathsf{lbs} \rrbracket$ by applying $\sigma$ to the labels, and sends $\llbracket \mathsf{lbs} \rrbracket$ to $S_2$. Upon receiving $l \in \mathcal{L}$ from $S_2$ (indicating the start of a threshold decryption operation from $S_2$), it does as follows. If $S_2$ is the last party in the threshold decryption, then it simply sends $S_2$ the partial decryption of the ciphertext under $l$ in PV using $k_1$. Otherwise, it receives the partial decryption of the ciphertext under $l$ in PV from $S_2$, discards it, and sends $b_l$ to $S_2$. 

\descr{Corrupted Parties $P$ and $S_1$:} In this case, the simulating adversary $\mathcal{A}$ simply uses $\mathsf{lbs} = \{(\sigma(i), f_i)\}_{i \in \mathcal{D}}$ received from the corrupted party $P$ to construct $D'$. It sends $D'$ to TTP. If it receives ``invalid dataset'' from TTP, it sends the message to $S_1$. Otherwise, the simulation proceeds as in the case of only $S_1$ being corrupted.

\descr{Corrupted Parties $P$ and $S_2$:} In this case, the simulating adversary $\mathcal{A}$ again uses $\mathsf{lbs} = \{(\sigma(i), f_i)\}_{i \in \mathcal{D}}$ received from the corrupted party $P$ to construct $D'$. It sends $D'$ to TTP. If it receives ``invalid dataset'' from TTP, it sends the message to $S_2$. Otherwise, it sends $\sigma^{-1}$ to $S_2$, and proceeds as in the case of only $S_2$ being corrupted. 

It follows that our protocol securely collects the partial view if EC-ElGamal is semantically secure under the DDH assumption.
\end{proof}

\subsection{Security: Indistinguishability of Real and Test Queries}
\label{app:tvsr}

We ``parameterize'' the semantic security game by $m_\mathsf{q}/m$. This means with probability $m_\mathsf{q}/m$ the challenger sends the encryption of $m_0$, otherwise, it sends the encryption of $m_1$ to the adversary. Note that we assume $1 \le m_\mathsf{t} \le m_\mathsf{q}$. Thus, if $m_\mathsf{t} = m_\mathsf{q}$ then we retrieve the standard semantic security game. In our query indistinguishability game, the adversary is given a set of queries $\mathcal{Q} \cup \mathcal{T}$, where $\mathcal{Q}$ is a set of real queries and $\mathcal{T}$ is a set of test queries, and the adversary is told which query belongs to which set. The challenger picks $m_\mathsf{q}$ real and $m_\mathsf{t}$ test queries (not necessarily unique). The challenger then picks a random query from this set, encrypts it using EC-ElGamal, and sends it to the adversary. The adversary outputs ``real'' or ``test.'' The use of the security parameter $z$ (for ElGamal encryption), and the domain $\mathcal{D}$ is implicit in this game. 

Our reduction is as follows, our semantic security adversary $\mathcal{A}$ chooses two messages $m_0 = 0$ and $m_1 = 1$. It then constructs a real query as the $|\mathcal{D}|$-element vector all whose elements are 0. It then constructs a test query as the $|\mathcal{D}|$-element vector whose first element is 1, and the rest are all zeroes. Adversary $\mathcal{A}$ gives the two as the set of real and test queries to the query indistinguishability adversary $\mathcal{B}$. Upon submitting $m_0 = 0$ and $m_1 = 1$ to the challenger, $\mathcal{A}$ receives $\llbracket m_b \rrbracket$. The adversary $\mathcal{A}$ constructs the $|\mathcal{D}|$-element query vector:
\[
\llbracket Q \rrbracket = \begin{pmatrix} \llbracket m_b \rrbracket & \llbracket 0 \rrbracket & \llbracket 0 \rrbracket & \cdots & \llbracket 0 \rrbracket \end{pmatrix},
\]
where $\llbracket 0 \rrbracket$ are the encryptions of zero under the given public key $K$ of the EC-ElGamal cryptosystem. Adversary $\mathcal{A}$ gives $\llbracket Q \rrbracket$ to $\mathcal{B}$. If $\mathcal{B}$ returns ``real'', $\mathcal{A}$ outputs 0, else it outputs 1. Clearly the advantage exactly translates to the advantage in the semantic security game, as the probability that $\mathcal{B}$ receives a real or a test query is exactly $m_\mathsf{q}/m$, which is $1/2$, if $m_\mathsf{t} = m_\mathsf{q}$.


\section{Differential Privacy}
\label{DP_appendix}

Differential privacy~\citep{calib-noise}, is a definition of privacy tailored to statistical analysis of datasets. Informally, a mechanism (algorithm) satisfying the definition of differential privacy inherits the guarantee that the probability of any output of the algorithm with or without any single record in the dataset remains similar. 
More formally, given two neighbouring datasets $D_1$ and $D_2$ from a public domain $\mathcal{D}$, an algorithm $\mathcal{M}$ satisfies $\epsilon$-differential privacy, if for all subsets of outputs $O$ in the output range of $\mathcal{M}$ it holds that 
\[
\mathrm{P}(\mathcal{M}(D_1) \in O) \leq e^\epsilon \mathrm{P}(\mathcal{M}(D_2) \in O ).
\]

Here, neighbouring datasets means that $\lVert D_1 - D_2 \rVert_1 = 1$, i.e., the two datasets differ in one row. $\epsilon$ is the privacy parameter.

\descr{Composition Theorem of Differential Privacy.} Given algorithms $\mathcal{M}_1$, ..., $\mathcal{M}_k$ that satisfy $\epsilon_1$, \ldots, $\epsilon_k$-differential privacy, respectively, their combination defined by $\mathcal{M}_{[k]} = (\mathcal{M}_1,\ldots, \mathcal{M}_k)$ is an $\sum_{i=1}^{k}\epsilon_i$-differentially private algorithm \citep[\S 3.5]{Dwork2014}. 

Given parameter $\epsilon$ and query sensitivity $\Delta Q$, adding noise from the Laplace distribution of scale $\Delta Q/\epsilon$ to the answer $Q(D)$ is $\epsilon$-differentially private~\citep{Dwork2006}. 
If a set of $m$ count queries ($\Delta Q = 1$)  are evaluated over a dataset, to achieve $\epsilon$-differential privacy, according to the composition theorem of differential privacy, a noise drawn from the Laplace distribution of scale $m/\epsilon$ is added to each query answer.

\descr{Privacy and Utility Guarantee.}
To achieve $\epsilon$-differential privacy, given the query sensitivity $\Delta Q$, when evaluating query $Q$ on a dataset $D$ the Laplace mechanism outputs a query answer as $Q(D) + \upsilon$ where $\upsilon$ is drawn from a Laplace distribution with scale $\Delta Q/\epsilon$. The degradation of utility can be observed as the noise added to the query answer which guarantees $Pr[|\upsilon| \leq \frac{\Delta Q}{\epsilon}\ln{(1/\sigma)}] \geq 1- \sigma$, $\forall \sigma \in (0,1]$. This means with a probability of at least $1-\sigma$, the maximum degradation of utility is $\frac{\Delta Q}{\epsilon}\ln{(1/\sigma)}$, i.e., no query answer will be off more than an additive error of $\frac{\Delta Q}{\epsilon}\ln{(1/\sigma)}$ \citep[\S 3.3]{Dwork2014}.

\section{Notations}
\begin{table}[ht!]
\caption{Summary of frequently used notations}
\begin{tabular}{ l l } 
\hline
Notation &  Description \\
\hline
$S$ &  The set of servers\\
$P$ & Participant\\
$\mathcal{D}$ &   Data domain\\
$a$ & Domain cap\\
$N$ &  Dataset size\\
$V$ & Partial view size\\
$\rho$ & Ratio of $V$ to $N$\\
$\mathcal{L}$ & Servers' background knowledge\\
$L$ & Background knowledge size\\
$l$ & Label of a data record\\
$\mathsf{val}(l)$ & Count of the presence of a record with label $l$\\
$\eta$ & Tolerated false positive rate\\
$r_0$ & Threshold of known records to be found in partial view\\
$K_{S}$ & Collective public key\\
$\epsilon$ & Overall privacy budget\\
$\sigma$ & Random permutation\\
$\mathsf{lbs}$ & Set of labels and their flags in shuffled order\\
$\mathbf{u}$ & N-element binary vector\\
$\llbracket \mathbf{.} \rrbracket$ & Encrypted form under public key $K_{S}$\\
$P_c$ & Cheating participant\\
\hline
\end{tabular}
\end{table}

\end{document}